\newcommand{\NN}{\mathbb N}
\newcommand{\ZZ}{\mathbb Z}
\newcommand{\BB}{\mathbb B}
\newcommand{\frS}{\mathfrak S}
\newcommand{\vv}{\mathbf}
\newcommand{\LC}{L^\text{(C)}}
\newcommand{\LT}{L^\text{(T)}}
\newcommand{\WC}{W^\text{(C)}}
\newcommand{\WT}{W^\text{(T)}}
\title{On Asymptotic Gate Complexity and Depth of Reversible Circuits With Additional Memory}
\author{
    Dmitry V. Zakablukov\\
    Department of Information Security\\
    Bauman Moscow State Technical University\\
    Moscow, Russian Federation 105005\\
    \email{dmitriy.zakablukov@gmail.com}
}
\begin{abstract}
    The reversible logic can be used in various research areas, e.\,g. quantum computation, cryptography and signal processing.
    In the paper we study reversible logic circuits with additional inputs,
    which consist of NOT, CNOT and C\textsuperscript{2}NOT gates.
    We consider a set $F(n,q)$ of all transformations $\BB^n \to \BB^n$ that can be realized by reversible circuits
    with $(n+q)$ inputs.
    An analogue of Lupanov's method for the synthesis of reversible logic circuits with additional inputs is described.
    We prove upper asymptotic bounds for the Shannon gate complexity function $L(n,q)$ and the depth function $D(n,q)$
    in case of $q > 0$: $L(n,q_0) \lesssim 2^n$ if $q_0 \sim n 2^{n-o(n)}$ and $D(n,q_1) \lesssim 3n$ if $q_1 \sim 2^n$.
\end{abstract}
\begin{document}

\section{Introduction}

The reversible logic is essential in the quantum computing. It also has a great potential in designing
various computing devices with low power consumption. \cite{landauer} proved that the irreversibility of
computations leads to the energy dissipation regardless of the underlying technology.
\cite{bennett} showed that the absence of heat generation can be achieved only when a circuit
is completely built from reversible gates.
The main problem is that we should find a compromise between the gate complexity,
the depth (working time) of a reversible circuit and the amount of used memory (additional inputs)
when solving the problem of reversible logic synthesis.
Unfortunately, strict asymptotic bounds for the gate complexity and the depth of reversible circuits
haven't been found so far, especially in the case of using additional inputs.

The circuit complexity theory goes back to the work of~\cite{shannon}. He was the first who suggested to consider
the complexity of the minimal switching circuit, which realizes a Boolean function, as a complexity measure of this function.
For today, the asymptotic gate complexity $L(n) \sim 2^n \mathop / n$ of a Boolean function of $n$ variables in the basis
of classical gates ``NOT, OR, AND'' is well-known.

The problem of computations with the limited memory was considered by~\cite{karpova}.
She proved that the asymptotic gate complexity of a circuit, which consists of the gates corresponding to all
Boolean functions of $p$ variables and which uses at least three memory registers,
depends on the value of $p$, but doesn't depend on the number of used memory registers.
Also she proved that any Boolean function can be realized in such a circuit using only two memory registers.

\cite{lupanov} considered circuits of functional elements with delays.
He proved that in a regular basis of functional elements any Boolean function can be realized in a circuit
with asymptotically the best gate complexity and with the delay $T(n) \sim \tau n$,
where $\tau$ is the constant depending on the basis.
Though the depth and the delay of a circuit can be defined differently (see~\citealt{khrapchenko}),
in the model of reversible circuit described below we can consider the value of $T(n)$
as the circuit depth. However a dependency of $T(n)$ on the number of used memory registers was not considered
for the ``classical'' circuits.

A gate is called reversible if it implements a bijective transformation.
There are several known reversible gates for today. Among them are the NOT gate; the controlled NOT (CNOT) gate,
introduced by~\cite{feynman}; the Toffoli gate (C\textsuperscript{2}NOT) introduced by~\cite{toffoli}; the Fredkin gate, etc.

A set $F(n,q)$ of all transformations $\BB^n \to \BB^n$
that can be implemented by reversible circuits with $(n+q)$ inputs was considered in~\cite{my_complexity_and_depth_no_memory}.
Also the Shannon gate complexity function $L(n,q)$ and the depth function $D(n,q)$ as functions of $n$
and the number of additional inputs $q$ (additional memory) were defined and upper bounded
in the case, when additional inputs are not allowed in a reversible circuit.

The subject of this paper is reversible logic circuits,
which consist of NOT, CNOT and C\textsuperscript{2}NOT gates and which can use an unlimited amount of additional inputs
(unlike the reversible circuits we have studied earlier, see~\citealt{my_complexity_and_depth_no_memory}).

We will describe an analogue of Lupanov's method for synthesizing a reversible circuit with additional inputs,
which has the minimal gate complexity or the minimal depth. Using this synthesis approach,
we will prove the following upper asymptotic bounds for the functions $L(n,q)$ and $D(n,q)$:
\begin{align*}
    L(n,q_0) &\lesssim 2^n \text{ , if \,} q_0 \sim n 2^{n-o(n)} \;  , \\
    D(n,q_1) &\lesssim 3n  \text{ , if \,} q_1 \sim 2^n \;  .
\end{align*}
Also, some upper bounds for the quantum weight function will be proved.

Using the lower and upper bounds for the functions $L(n,q)$ and $D(n,q)$,
we state that the usage of additional memory in a reversible circuit,
which consists of NOT, CNOT and C\textsuperscript{2}NOT gates,
almost always allows to reduce its gate complexity and the depth.

\section{Background}

The controlled NOT gate (CNOT) was introduced by~\cite{feynman}.
The Toffoli gate was introduced by~\cite{toffoli}.
The generalized Toffoli gate with multiple control inputs is usually denoted as C\textsuperscript{k}NOT
or TOF$_{k+1}$, where $k$ stands for the number of control inputs.
The synthesis of reversible circuits consisting of these gates
was discussed in several works, see~\cite{my_fast_synthesis_algorithm, iterative_compositions,
miller_spectral, miller_transform_based, saeedi_novel, maslov_rm_synthesis, saeedi_cycle_based}.

We use the following notation for a generalized Toffoli gate.
\begin{definition}
    A generalized Toffoli gate with $k$ control inputs
    $TOF^n_{k+1}(I;t) = TOF^n_{k+1}(i_1, \cdots, i_k, t)$ is a reversible gate with $n$ inputs,
    which defines a transformation $\BB^n \to \BB^n$ as follows:
    $$
        f_{TOF^n_{k+1}(I;t)}(\langle x_1, \cdots, x_n \rangle) = \langle x_1, \cdots,
            x_t \oplus x_{i_1}\wedge \cdots \wedge x_{i_k}, \cdots, x_n \rangle \;  ,
    $$
    where $I = \{\,i_1, \cdots, i_k \,\}$ is a set of indices of control input lines
    and $t$ is an index of a controlled output line, $t \notin I$.
\end{definition}

From the definition one can note that a gate TOF(a) is a NOT gate, TOF(a,b) is a CNOT gate
and TOF(a,b,c) is a C\textsuperscript{2}NOT gate.

We denote a set of all TOF$^n_{k+1}$ gates, $k < 3$, as $\Omega_n^2$ (i.\,e. all NOT, CNOT and C\textsuperscript{2}NOT gates).
An upper and/or a lower indices in TOF$^n_{k+1}$ will be omitted, if their value is clear from the context.

Fan-in, fan-out and a random connection of inputs and outputs of gates in a reversible circuit are forbidden.
We assume that all gates in a reversible circuit have exactly $n$ numbered inputs and outputs
and that the $i$-th output of a gate is connected only to the $i$-th input of the following gate.
Thus in our model of a reversible circuit a graph associated with a circuit presents itself a single chain.
We will refer to such a connection of reversible gates as \textit{composition}.

A symbol $r_i$ from a set $R = \{\,r_1, \cdots, r_n\,\}$ can be assigned to the $i$-th input and output of a gate.
All these symbols can be treated as names of memory registers (indices of memory cells),
which store the current computation result of a circuit.

If we consider all the gates from $\Omega_n^2$ regardless of an underlying technology, we can assume that they all have the same
technological cost. However, in a quantum technology, for example, a technological cost of NOT and CNOT gates is much less than
a technological cost of a Toffoli gate (see~\citealt{barenco}).
Hence, we will assume that a gate $e$ from $\Omega_n^2$ has the weight $W(e)$ depending on the underlying technology.
More precisely, we will asume that all NOT and CNOT gates from $\Omega_n^2$ have the same weight $\WC$ and
all C\textsuperscript{2}NOT gates from $\Omega_n^2$ have the weight $\WT$.

Let a reversible circuit $\frS$ with $n$ inputs be a composition of $l$ gates from $\Omega_n^2$:
$\frS = \mathop{*}_{j=1}^l {TOF(I_j; t_j)}$.
In the paper we study the following circuit's properties:
\begin{enumerate}
    \item
        The gate complexity $L(\frS)$, equal to the number of gates $l$.
        
    \item
        The quantum weight $W(\frS)$, equal to the sum of weights of all its gates.
        
    \item
        The depth $D(\frS)$, equal to the number of gates in the path from inputs to outputs that cannot be executed
        simultaneously.
\end{enumerate}

Note that the quantum weight $W(\frS)$ of a reversible circuit $\frS$ is not equal to its technological cost,
because they may significantly differ.
But we can state that in most cases a greater value of the function $W(\frS)$ means a greater
technological cost of a reversible circuit $\frS$.

A formal definition of the reversible circuit depth for our circuit model can be found
in~\cite{my_complexity_and_depth_no_memory}.
Here we just want to remind that a reversible circuit $\frS$ has the depth $D(\frS) = 1$, if for every two of its gates
$TOF(I_1; j_1)$ and $TOF(I_2; j_2)$ the following equation holds:
$$
    \left( \{\,t_1\,\} \cup I_1 \right) \cap \left( \{\,t_2\,\} \cup I_2 \right) = \emptyset \;  .
$$
Also, the depth $D(\frS)$ of a reversible circuit $\frS$ equals to the minimal number $d$ of disjoint sub-circuits
with the depth of each equal to one in the following equation:
$$
    \frS = \bigsqcup_{i=1}^d{{\frS'}_i}, \text{ } {\frS'}_i \subseteq \frS, \text{ }D({\frS'}_i) = 1 \;  .
$$

For example, a reversible circuit $\frS = TOF(1;2) * TOF(3,1) * TOF(2) * TOF(4) * TOF(1,4,2) * TOF(3)$
(see~\ref{pic_scheme_example}) has the gate complexity $L(\frS) = 6$ and the depth $D(\frS) = 3$,
because we can divide the circuit into three disjoint sub-circuits with the depth of each equal to one in the following manner:
$\frS = (TOF(1;2)) * (TOF(3,1) * TOF(2) * TOF(4)) * (TOF(1,4,2) * TOF(3))$.

\begin{figure}[ht]
    \centering
    \includegraphics{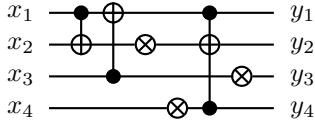}
    \caption
    {
        A reversible circuit $\frS = TOF(1;2) * TOF(3,1) * TOF(2) * TOF(4) * TOF(1,4,2) * TOF(3)$
        with the gate complexity $L(\frS) = 6$ and the depth $D(\frS) = 3$.
    }\label{pic_scheme_example}
\end{figure}

Note that the reversible circuit is equivalent to another one with the depth equal to three:
$\frS_1 = (TOF(1;2) * TOF(4)) * (TOF(3,1) * TOF(2)) * (TOF(1,4,2) * TOF(3))$. Therefore from here on we will consider
that such circuits are different in terms of our reversible circuit's model,
but equivalent in terms of the equality of Boolean transformations defined by them.

\section{Asymptotic bounds for reversible circuits without additional inputs}

It is well-known that a reversible circuit $\frS$ with $n \geq 4$ inputs defines an even permutation
on the set $\BB^n$, see~\cite{shende}.
But it can also implement a transformation $\BB^m \to \BB^k$, where $m, k \leq n$,
with or without an additional memory.
A circuit with $(n+q)$ inputs implements a transformation $f\colon \BB^n \to \BB^m$,
if there is such a permutation $\pi \in S(\ZZ_{n+q})$ for circuit outputs that every input of the form
$\langle x_1, \cdots, x_n, 0, \cdots, 0\rangle$ is transformed by the circuit into an output
$\langle y_1, \cdots, y_m, *, \cdots, *\rangle$ after applying the permutation $\pi$,
where $f(\langle x_1, \cdots, x_n\rangle) = \langle y_1, \cdots, y_m\rangle$
(see~\ref{pic_function_realization_with_memory}).

\begin{figure}[ht]
    \centering
    \includegraphics{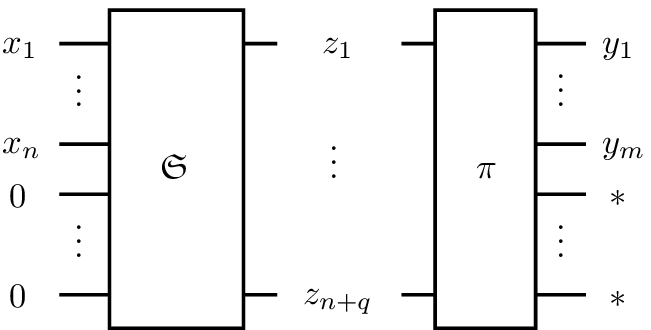}
    \caption
    {
        A reversible circuit $\frS$ implementing a transformation $f\colon \BB^n \to \BB^m$
        with $q$ additional inputs. For every $\vv x \in \BB^n$ the equation
        $f(\langle x_1, \cdots, x_n\rangle) = \langle y_1, \cdots, y_m\rangle$ holds.
    }\label{pic_function_realization_with_memory}
\end{figure}

We remind that in our terminology expressions ``implements a transformation'' and ``defines a transformation'' have
different meanings. If a circuit $\frS$ implements a transformation $f\colon \BB^n \to \BB^n$ and has exactly $n$ inputs,
we will say that this circuit implements $f$ \textit{without additional inputs}.

We marked all ``don't care'' outputs of a reversible circuit by the symbol~* on~\ref{pic_function_realization_with_memory}.
In most cases these outputs will not be cleared out in the end, i.\,e. they will contain a \textit{computational garbage}.
Unfortunately, this garbage can be removed only if a transformation $f$ implemented by a reversible circuit $\frS$
is bijective.
In this case we can clear out all garbage outputs, except ones corresponding to the inputs of $f$, with the help of
a part of the existing circuit (let's denote it as $\frS_*$). Then we can append a reversible circuit $\frS^{-1}$
implementing the transformation $f^{-1}$ with generating of computational garbage and with clearing out
the outputs corresponding to the inputs of $f$. And finally, we can remove
this generated garbage with the help of a part of the circuit $\frS^{-1}$ (let's denote it as $\frS^{-1}_*$).
Thus a resulting circuit $\frS_{\text{res}}$ will have the gate complexity
$L(\frS_{\text{res}}) \leq 4 \cdot \max(L(\frS), L(\frS^{-1}))$ and the depth
$D(\frS_{\text{res}}) \leq 4 \cdot \max(D(\frS), D(\frS^{-1}))$ (see~\ref{pic_clearing_out_garbage}).

\begin{figure}[ht]
    \centering
    \includegraphics{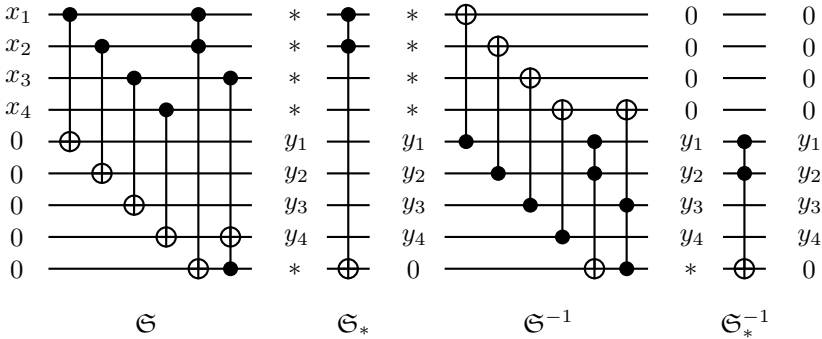}
    \caption
    {
        An example of a reversible circuit $\frS_{\text{res}} = \frS * (\frS_*) * \frS^{-1} * (\frS^{-1}_*)$
        with garbage removing.
    }\label{pic_clearing_out_garbage}
\end{figure}

Therefore, all asymptotic bounds for the gate complexity and the depth will be given later for a reversible circuit with
a computational garbage on the outputs. To obtain similar bounds for a reversible circuit without
a computational garbage on the outputs, one should multiply them by four.

Let $P_2(n,n)$ be the set of all transformations $\BB^n \to \BB^n$
and $F(n,q) \subseteq P_2(n,n)$ be the set of all transformations, which can be implemented by reversible circuits with
$(n+q)$ inputs.
It is not difficult to show that $F(n,0)$, $n > 3$, is equal to the set of transformations that are defined
by all the permutations from the alternating group $A(\BB^n)$ and $F(n,q) = P_2(n,n)$ if $q \geq n$.

We denote the minimum gate complexity, the minimum depth and the minimum quantum weight of a reversible circuit
among all reversible circuits implementing a transformation $f \in F(n,q)$ with $q$ additional inputs
as $L(f,q)$, $D(f,q)$ and $W(f,q)$ respectively. The Shannon gate complexity function $L(n,q)$,
the depth function $D(n,q)$ and the quantum weight function $W(n,q)$ are defined as follows:
\begin{align*}
    L(n,q) &= \max_{f \in F(n,q)} {L(f,q)} \;  , \\
    D(n,q) &= \max_{f \in F(n,q)} {D(f,q)} \;  , \\
    W(n,q) &= \max_{f \in F(n,q)} {W(f,q)} \;  .
\end{align*}

For the purpose of estimating the function $W(n,q)$, we will count the number of NOT/CNOT and C\textsuperscript{2}NOT gates in
a reversible circuit separately.
If we denote the number of NOT and CNOT gates in a reversible circuit $\frS$ as $\LC(\frS)$
and the number of C\textsuperscript{2}NOT gates as $\LT(\frS)$, then the following equation holds:
\begin{equation}\label{formula_quantum_weight_and_gate_complexity_dependency}
    W(\frS) = \WC \cdot \LC(\frS) + \WT \cdot \LT(\frS) \;  .
\end{equation}

We proved (see~\citealt{my_complexity_and_depth_no_memory}) that there is such $n_0 \in \NN$
that for $n > n_0$ the following equations hold:
\begin{align}
    L(n,q) &\geq \frac{2^n(n-2)}{3\log_2 (n+q)} - \frac{n}{3} \;  ,
        \label{formula_lower_complexity}\\
    D(n,q) &\geq \frac{2^n(n-2)}{3(n+q)\log_2 (n+q)} - \frac{n}{3(n+q)} \;  ,
        \label{formula_lower_depth}\\
    W(n,q) &\geq \min(\WC, \WT) \cdot \left(\frac{2^n(n-2)}{3\log_2 (n+q)} - \frac{n}{3} \right) \; .
        \label{formula_lower_quantum_weight}
\end{align}
Also, the following uppper bounds for a reversible circuit
without additional inputs were proved (see~\citealt{my_complexity_and_depth_no_memory}):
\begin{align}
    L(n,0) &\leq \frac{3n2^{n+4}}{\log_2 n - \log_2 \log_2 n - \log_2 \phi(n)}
            \left( 1 + \epsilon_L(n) \right) \;  ,
        \label{formula_upper_complexity_no_mem}\\
    D(n,0) &\leq \frac{n2^{n+5}}{\log_2 n - \log_2 \log_2 n - \log_2 \phi(n)}
            \left( 1 + \epsilon_D(n) \right) \;  ,\\
    W(n,0) &\leq \frac{n2^{n+4} \left( \WC(1 + \epsilon_C(n)) + 2\WT(1 + \epsilon_T(n) \right)}
            {\log_2 n - \log_2 \log_2 n - \log_2 \phi(n)} \;  ,
        \label{formula_upper_quantum_weight_no_mem}
\end{align}
where $\phi(n) < n \mathop / \log_2 n$ is an arbitrarily slowly growing function and
\begin{align*}
    \epsilon_L(n) &= \frac{1}{6\phi(n)} +\left(\frac{8}{3} - o(1)\right)
            \frac{\log_2 n \cdot \log_2 \log_2 n}{n} \;  , \\
    \epsilon_D(n) &= \frac{1}{4\phi(n)} +(4 - o(1))\frac{\log_2 n \cdot \log_2 \log_2 n}{n} \;  , \\
    \epsilon_C(n) &= \frac{1}{2\phi(n)} - \left( \frac{1}{2} - o(1) \right) \cdot \frac{\log_2 \log_2 n }{n} \;  , \\
    \epsilon_T(n) &= (4 - o(1))\frac{\log_2 n \cdot \log_2 \log_2 n}{n} \;  .
\end{align*}

Unfortunately, there are no known upper asymptotic bounds for the functions $L(n,q)$, $D(n,q)$ and $W(n,q)$ in the case when
a reversible circuit can use an unlimited amount of additional inputs for today.
Nevertheless, it has been already showed that in some cases the usage of additional memory in a reversible circuit
consisting of gates from $\Omega_n^2$ allows to reduce its gate complexity and the depth, see%
~\cite{barenco,miller_reducing_complexity, abdessaied_reducing_depth}.

\section{Reducing the gate complexity with the help of additional inputs}\label{section_minimal_gate_complexity}

Lupanov described asymptotically the best synthesis algorithm of a Boolean function in the basis
$\{\,\bar {\phantom x}, \wedge, \vee\,\}$. He proved that any Boolean function of $n$ variables can be
implemented in a circuit with the gate complexity $L \sim 2^n \mathop / n$
and with the total delay no more than $O(n)$, see~\cite{lupanov}.
We will modify Lupanov's method in order to synthesize a reversible circuit, which consists of gates from $\Omega_{n+q}^2$ and
implements a transformation $f \in F(n,q)$ with $q$ additional inputs.

The basis $\{\,\bar {\phantom x}, \oplus, \wedge\,\}$ is functionally complete, therefore it can be used to implement any
transformation $f \in F(n,q)$. Let's express every element of this basis
via a composition of NOT, CNOT and C\textsuperscript{2}NOT gates (see~\ref{pic_basis}).
As we can see, this requires no more than two gates and one additional input for every element of the basis.

\begin{figure}[ht]
    \centering
    \includegraphics{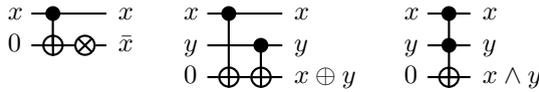}
    \caption
    {
        Implementing elements of the basis $\{\,\bar{\text{ \,}}, \oplus, \wedge\,\}$ with
        compositions of NOT, CNOT and C\textsuperscript{2}NOT gates.
    }\label{pic_basis}
\end{figure}

First, we prove the following lemma about the gate complexity of a reversible circuit implementing all conjunctions
of $n$ variables of the form $x_1^{a_1} \wedge \cdots \wedge x_n^{a_n}$, $a_i \in \BB$.
\begin{lemma}\label{lemma_complexity_of_all_conjunctions_of_n_variables}
All conjunctions of $n$ variables of the form $x_1^{a_1} \wedge \cdots \wedge x_n^{a_n}$, $a_i \in \BB$,
can be implemented in a reversible circuit $\frS_n$, which consists of gates from $\Omega_{n+q}^2$,
with the gate complexity $L(\frS_n) \sim 2^n$ and with $q(\frS_n) \sim 2^n$ additional inputs.
\end{lemma}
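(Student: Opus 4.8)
The plan is to factor every length-$n$ conjunction through two shorter variable blocks, so that the whole $\Theta(2^n)$ cost is concentrated in a single layer of C\textsuperscript{2}NOT gates, exactly one gate per produced conjunction. This detour is what keeps the leading constant equal to $1$: a single balanced product tree over all $n$ variables would spend, at every internal node, one C\textsuperscript{2}NOT to form $p\wedge x$ together with one CNOT to turn the parent line into $p\wedge\bar x$, giving $\sim 2\cdot 2^n$ gates and the wrong constant. Splitting the variables lets us avoid paying the CNOT overhead on the dominant layer.

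First I would partition the variables into $X_1=\{x_1,\ldots,x_m\}$ and $X_2=\{x_{m+1},\ldots,x_n\}$ with $m=\lceil n/2\rceil$, and observe that every conjunction $x_1^{a_1}\wedge\cdots\wedge x_n^{a_n}$ factors as $u_b\wedge v_c$, where $u_b=x_1^{a_1}\wedge\cdots\wedge x_m^{a_m}$ ranges over the $2^m$ conjunctions of $X_1$ and $v_c$ over the $2^{n-m}$ conjunctions of $X_2$. Next I would build, on fresh zero-initialised inputs, all conjunctions of each block by the naive recursive scheme: growing partial products one variable at a time, using at each splitting step one C\textsuperscript{2}NOT to write $p\wedge x$ onto a new input and one CNOT to replace the parent value $p$ by $p\wedge\bar x$ (the constant $1$ needed to seed the recursion being produced by a single NOT). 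For a block of $r$ variables this uses $O(2^r)$ gates and $O(2^r)$ additional inputs; since $m,\,n-m\le\lceil n/2\rceil$, both blocks together cost $O(2^{n/2})=o(2^n)$ gates and $o(2^n)$ inputs, so the suboptimal constant of this subconstruction is harmless.

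Finally I would combine the blocks: for each pair $(b,c)$ I allocate a fresh zero input and apply the gate $TOF(u_b\text{-line},\,v_c\text{-line};\text{new line})$, which writes $u_b\wedge v_c$ onto it. Because a control input is left unchanged by a $TOF$ gate, each block-conjunction line may serve as the control for arbitrarily many gates, so this layer is well defined and uses exactly $2^n$ C\textsuperscript{2}NOT gates placed on $2^n$ fresh inputs. Summing the two stages gives $L(\frS_n)=2^n+o(2^n)\sim 2^n$ and $q(\frS_n)=2^n+o(2^n)\sim 2^n$, as claimed.

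I expect the only real obstacle to be the bookkeeping that pins the leading constant at $1$: one must check that everything outside the final C\textsuperscript{2}NOT layer — namely the two block decoders together with all their CNOT and NOT overhead — is genuinely $o(2^n)$ in both gate count and number of additional inputs, which is exactly what the balanced split $m=\lceil n/2\rceil$ guarantees. A secondary point to confirm is the legitimacy of reusing each block-conjunction line as a control for many gates in the composition model, and this follows because a $TOF$ gate never alters its control inputs.
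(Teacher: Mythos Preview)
Your argument is correct and follows essentially the same approach as the paper: split the variables into two balanced halves, realise all $2^{\lceil n/2\rceil}$ and $2^{\lfloor n/2\rfloor}$ short conjunctions separately at cost $o(2^n)$, and then spend exactly one C\textsuperscript{2}NOT gate and one fresh line per pair to produce all $2^n$ full conjunctions. The only difference is cosmetic: the paper builds the two half-decoders by the \emph{same} recursion (so $\frS_n$ is defined from $\frS_{\lceil n/2\rceil}$ and $\frS_{\lfloor n/2\rfloor}$, giving the recurrence $L(\frS_n)\sim 2^n+2L(\frS_{n/2})\sim 2^n$), whereas you build each half by a one-variable-at-a-time doubling tree; since either sub-construction costs $O(2^{n/2})=o(2^n)$, the leading term and constant are identical.
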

\begin{proof}
First step is obtaining inversions of all input variables: $\bar x_i$, $1 \leq i \leq n$.
This can be done using $L_1 = 2n$ NOT and CNOT gates and $q_1 = n$ additional inputs.

We construct our reversible circuit $\frS_n$ this way: using circuits $\frS_{\lceil n/2 \rceil}$
and $\frS_{\lfloor n/2 \rfloor}$, we implement all conjunctions of the first $\lceil n/2 \rceil$
and the last $\lfloor n/2 \rfloor$ variables (see~\ref{pic_conjunctions_complexity}).
After this we implement conjunctions of outputs of the circuit $\frS_{\lceil n/2 \rceil}$ with outputs of the circuit
$\frS_{\lfloor n/2 \rfloor}$. This can be done using $L_2 = 2^n$ C\textsuperscript{2}NOT gates and $q_2 = 2^n$ additional inputs.

Hence, the following equation holds:
$$
    L(\frS_n) \sim q(\frS_n) \sim 2^n + 2L(\frS_{n/2}) \sim 2^n \;  .
$$
\end{proof}

\begin{figure}[ht]
    \centering
    \includegraphics{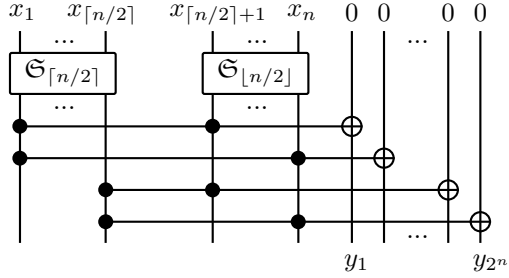}
    \caption
    {
        The structure of a reversible circuit $\frS_n$ implementing conjunctions
        of $n$ variables with the minimal gate complexity.
    }\label{pic_conjunctions_complexity}
\end{figure}

Now we can prove the first theorem of the paper.
\begin{theorem}\label{theorem_complexity_upper_with_memory}
    $$
        L(n,q_0) \lesssim 2^n, \text{ \,if \, } q_0 \sim n 2^{n-\lceil n \mathop / \phi(n)\rceil} \;  ,
    $$
    where $\phi(n) \leq n \mathop / (\log_2 n + \log_2 \psi(n))$
    and $\psi(n)$ are arbitrarily slowly growing functions.
\end{theorem}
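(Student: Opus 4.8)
The plan is to adapt Lupanov's asymptotically optimal method for Boolean function synthesis to the reversible setting, using the conjunction-implementation circuit from Lemma~\ref{lemma_complexity_of_all_conjunctions_of_n_variables} as the workhorse. A transformation $f \in F(n,q)$ consists of $n$ output coordinate functions, each a Boolean function of the $n$ input variables. The key idea in Lupanov's method is to partition the set of $2^n$ input assignments and express each output function through a two-stage decomposition, so that many output functions share the same precomputed conjunctions. I would first compute all conjunctions $x_1^{a_1} \wedge \cdots \wedge x_n^{a_n}$, or rather all conjunctions of a suitably chosen block of the variables, by invoking the lemma; this gives me a pool of precomputed products at cost $\sim 2^n$ gates and $\sim 2^n$ additional inputs.

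The core step is the Lupanov partition. The plan is to split the $n$ variables into two groups, say the first $k$ variables and the remaining $n-k$ variables, and to further partition the $2^{n-k}$ assignments of the second group into blocks of size roughly $s$. Each output function is then written as a disjunction (XOR, in our basis) over the blocks, where within each block the function depends only on the first $k$ variables and on which row of the block is active. The standard accounting shows that the number of distinct ``column functions'' one must realize is bounded by $2^{ks}$ per block type, and choosing $k \sim \log_2 n - \log_2 \log_2 n$ (tied to $\phi(n)$ and $\psi(n)$ as in the theorem statement) balances the two contributions so that the dominant term is $\sim 2^n / n$ per output function. Summing over all $n$ output coordinates and multiplying by the constant factor coming from expressing each $\wedge$ and $\oplus$ via NOT/CNOT/C\textsuperscript{2}NOT gates (at most two gates and one ancilla each, per Figure~\ref{pic_basis}), the total gate count should come out to $\sim 2^n$, with the additional-input count driven up to $q_0 \sim n 2^{n - \lceil n/\phi(n)\rceil}$ because each of the $n$ outputs and each intermediate XOR-accumulation consumes fresh ancillas.

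The bookkeeping for ancillas is where I expect the main obstacle to lie. In the classical Lupanov construction one reuses wires freely, but in a reversible circuit every intermediate result must be written to a fresh line (no fan-out, no overwriting without an explicit uncomputation), so the additional-input budget is exactly what must be controlled to land on the stated $q_0$. The delicate part is verifying that the parameter $k$ can be chosen to simultaneously make the gate complexity asymptotic to $2^n$ \emph{and} keep $q_0$ at the claimed order $n 2^{n - \lceil n/\phi(n)\rceil}$ rather than a larger power; this forces a careful choice of block size $s$ relative to $\phi(n)$. I would handle this by tracking the ancilla cost stage by stage, showing that the conjunction pool dominates with $\sim 2^n$ ancillas and the per-output accumulation contributes the factor $n$, and then confirming that the exponent savings $\lceil n/\phi(n)\rceil$ arises precisely from the reuse of the shared conjunction pool across all blocks.

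Finally, I would assemble the estimate by writing $L(f,q_0) \le L(\frS_n) + \sum_{i=1}^{n} L_i$, where $L_i$ bounds the cost of the $i$-th output function, and invoking Lemma~\ref{lemma_complexity_of_all_conjunctions_of_n_variables} for the first term. Since this bound is uniform over all $f \in F(n,q_0)$, taking the maximum yields $L(n,q_0) \lesssim 2^n$. The asymptotic relation $q_0 \sim n 2^{n-\lceil n/\phi(n)\rceil}$ follows from the ancilla count once the parameters $k$ and $s$ are fixed in terms of $\phi$ and $\psi$. The one routine check I would defer is the verification that the lower-order correction terms (the $2L(\frS_{n/2})$-type recursions and the $o(1)$ factors in the basis conversion) are genuinely negligible against the leading $2^n$.
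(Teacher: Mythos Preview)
Your high-level plan --- split the variables into two groups, precompute conjunctions, group them into blocks of size $s$, precompute all $2^s$ XOR-combinations per block, then assemble each coordinate function --- is the right skeleton and matches the paper's five-sub-circuit construction. But several of your concrete details are off in ways that matter. First, the count ``$2^{ks}$ per block type'' is wrong: each block contains at most $s$ conjunctions, so the number of XOR-combinations per block is $2^s$, not $2^{ks}$. Second, your parameter choice $k \sim \log_2 n - \log_2\log_2 n$ is not compatible with the stated $q_0$; the paper takes $k = \lceil n/\phi(n)\rceil$ and $s = n-2k$, so that $k \geq \log_2 n + \log_2\psi(n)$ grows, and the ancilla count $q_0 \sim n2^{n-k}$ comes out exactly as claimed.

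The most serious gap is your ancilla accounting. You write that the conjunction pool costs $\sim 2^n$ gates and $\sim 2^n$ ancillas, but $q_0 \sim n2^{n-\lceil n/\phi(n)\rceil} = o(2^n)$ since $2^{\lceil n/\phi(n)\rceil} \geq n\psi(n)$. So computing all $2^n$ full minterms would already blow the ancilla budget. The paper never forms all $2^n$ conjunctions: it builds only the $2^k$ conjunctions of the first $k$ variables (sub-circuit $\frS_1$) and the $2^{n-k}$ conjunctions of the last $n-k$ variables ($\frS_4$), with $2^k, 2^{n-k} = o(q_0)$. The dominant ancilla term is actually $q_3 = n2^{n-k}$, coming from storing all $n \cdot 2^{n-k}$ coordinate functions $f_{i,j}$ on fresh lines; the dominant gate term is $L_3 \leq p\cdot n2^{n-k} \sim n2^n/s \sim 2^n$, coming from XOR-ing $p$ precomputed block-functions into each of those lines. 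Your proposal lacks this decomposition and the identification of which stage drives which cost, and your explanation of where the exponent saving $\lceil n/\phi(n)\rceil$ comes from (``reuse of the shared conjunction pool'') is not the mechanism: it comes from the choice $k = \lceil n/\phi(n)\rceil$ in $q_3 = n2^{n-k}$.
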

\begin{proof}
    We will describe a new synthesis algorithm \textbf{A1}, which is similar to the Lupanov's method
    and whose main goal is the reduction of the gate complexity with the help of additional inputs.

    Let's consider a transformation $f\colon \BB^n \to \BB^n$. It can be represented as follows:
    \begin{multline}\label{formula_function_decomposition_by_last_variables}
        f(\vv x) = \bigoplus_{a_{k+1}, \cdots, a_n \in \BB} {x_{k+1}^{a_{k+1}} \wedge \cdots \wedge x_n^{a_n}} \wedge \\
            \wedge f(\langle x_1, \cdots, x_k, a_{k+1}, \cdots, a_n \rangle) \;  .
    \end{multline}

    Each of $2^{n-k}$ Boolean transformations
    $$
        f(\langle x_1, \cdots, x_k, a_{k+1}, \cdots, a_n \rangle) = f_i(\langle x_1, \cdots, x_k \rangle)  \; ,
    $$
    where $\sum_{j=1}^{n-k} {a_{k+j} 2^{j-1}} = i$,
    is a Boolean transformation $\BB^k \to \BB^n$ and can be represented as the system of $n$ coordinate functions
    $f_{i,j}(\vv x)$, $\vv x \in \BB^k$, $1 \leq j \leq n$.

    The value of every coordinate function $f_{i,j}(\vv x)$ can be calculated with the help of an analogue of
    a disjunctive normal form:
    \begin{equation}\label{formula_analog_sdnf}
        f_{i,j}(\vv x) = \bigoplus_{
            \substack{\boldsymbol \sigma \in \BB^k \\f_{i,j}(\boldsymbol \sigma) = 1}}
            x_1^{\sigma_1} \wedge \cdots \wedge x_k^{\sigma_k} \;  .
    \end{equation}
    
    All $2^k$ conjunctions of the form $x_1^{\sigma_1} \wedge \cdots \wedge x_k^{\sigma_k}$ can be divided into the groups
    with no more than $s$ conjunctions in each. The number of such groups is $p = \lceil 2^k \mathop / s \rceil$.
    Using conjunctions of a single group, we can construct no more than $2^s$ Boolean functions
    by the formula~\eqref{formula_analog_sdnf}.

    Let $G_i$ be the set of all Boolean functions that can be constructed with the help of conjunctions of an $i$-th group,
    $1 \leq i \leq p$. Then $|G_i| \leq 2^s$.
    Therefore, we can rewrite equation~\eqref{formula_analog_sdnf} as follows:
    \begin{equation}\label{formula_analog_sdnf_improved}
        f_{i,j}(\vv x) = \bigoplus_{
            \substack{t=1 \cdots p\\ g_{j_t} \in G_t\\ 1 \leq j_t \leq |G_t|}} g_{j_t}(\vv x) \;  .
    \end{equation}
    
    Note that all Boolean functions of a group $G_i$ can be implemented, using a similar technique as in the%
    ~\ref{lemma_complexity_of_all_conjunctions_of_n_variables}.
    From the~\ref{pic_conjunctions_complexity} we can see that all C\textsuperscript{2}NOT gates
    will be simply replaced by compositions
    of two CNOT gates. Thus, $L \sim 2^{s+1}$ CNOT gates and $q \sim 2^s$ additional inputs are required for this part
    in total.

    The synthesis algorithm \textbf{A1} constructs a reversible circuit $\frS$ implementing the transformation $f$%
    ~\eqref{formula_function_decomposition_by_last_variables}
    from the following sub-circuits (see~\ref{pic_five_subschemes}):

    \begin{figure}[ht]
        \centering
        \includegraphics{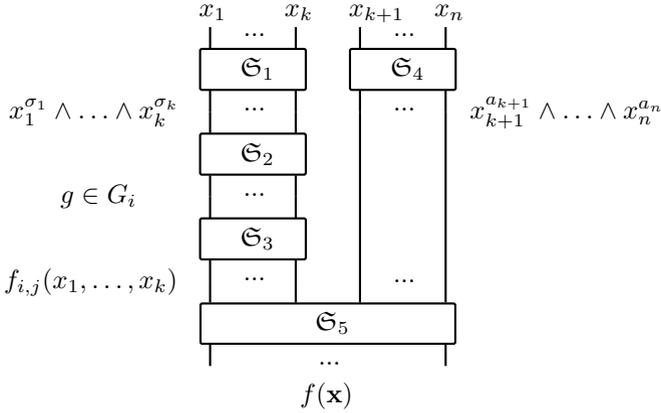}
        \caption
        {
            The structure of a reversible circuit $\frS$ produced by the synthesis algorithm \textbf{A1}.
        }\label{pic_five_subschemes}
    \end{figure}

    \begin{enumerate}
        \item
            Sub-circuit $\frS_1$ implementing all conjunctions of the first $k$ variables $x_i$
            by the~\ref{lemma_complexity_of_all_conjunctions_of_n_variables}
            with the gate complexity $L_1 \sim 2^k$ and with $q_1 \sim 2^k$ additional inputs.
            The sub-circuit almost completely consists of C\textsuperscript{2}NOT gates
            (the number of other gates is negligible).

        \item
            Sub-circuit $\frS_2$ implementing all Boolean functions $g \in G_i$ for all $i \in \ZZ_p$
            by the formula~\eqref{formula_analog_sdnf} with the gate complexity $L_2 \sim p2^{s+1}$ and with
            $q_2 \sim p2^s$ additional inputs (see the note above about the implementation
            of all Boolean functions of a group $G_i$).
            The sub-circuit consists only of CNOT gates.
        
        \item
            Sub-circuit $\frS_3$ implementing all $n2^{n-k}$ coordinate functions $f_{i,j}(\vv x)$,
            $i \in \ZZ_{2^{n-k}}$, $j \in \ZZ_n$, by the formula~\eqref{formula_analog_sdnf_improved}
            with the gate complexity $L_3 \leq pn 2^{n-k}$ and with $q_3 = n 2^{n-k}$ additional inputs.
            The sub-circuit consists only of CNOT gates.

        \item
            Sub-circuit $\frS_4$ implementing all conjunctions of the last $(n-k)$ variables $x_i$
            by the~\ref{lemma_complexity_of_all_conjunctions_of_n_variables}
            with the gate complexity $L_4 \sim 2^{n-k}$ and with $q_4 \sim 2^{n-k}$ additional inputs.
            The sub-circuit almost completely consists of C\textsuperscript{2}NOT gates
            (the number of other gates is negligible).
            
        \item
            Sub-circuit $\frS_5$ implementing the transformation $f$
            by the formula~\eqref{formula_function_decomposition_by_last_variables}
            with the gate complexity $L_5 \leq n 2^{n-k}$ and with $q_5 = n$ additional inputs.
            The sub-circuit consists only of C\textsuperscript{2}NOT gates.
    \end{enumerate}

    We are seeking the values of parameters $k$ and $s$ that satisfy the following conditions:
    $$
        \left\{
            \begin{array}{lr}
                s = n - 2k \;, & \\
                k = \lceil n \mathop / \phi(n) \rceil \;, & \text{where $\phi(n)$ is a growing function,} \\
                1 \leq s < n \;, & \\
                1 \leq k < n \mathop / 2 \;, & \\
                2^k \mathop / s \geq \psi(n) \;, & \text{where $\psi(n)$ is a growing function.}
            \end{array}
        \right.
    $$
    In this case $p = \lceil 2^k \mathop / s \rceil \sim 2^k \mathop / s$
    and $2^{\lceil n \mathop / \phi(n) \rceil} \geq s\psi(n)$.
    This implies that for any growing function $\phi(n) \leq n \mathop / (\log_2n + \log_2 \psi(n))$
    the values of parameters $k$ and $s$ will satisfy the conditions above.

    Summing up gate complexities and the number of additional inputs of sub-circuits $\frS_1$--$\frS_5$,
    we obtain the following bounds:
    \begin{gather*}
        L(\frS) \sim 2^k + p2^{s+1} + pn 2^{n-k} + 2^{n-k} + n 2^{n-k}  \;  , \\
        L(\frS) \sim 2^k + \frac{2^{n-k+1}}{s} + \frac{n2^n}{s}  \;  , \\
        q(\frS) \sim 2^k + p2^s + n2^{n-k} + 2^{n-k} + n \sim 2^k + \frac{2^{n-k}}{s} + n 2^{n-k}  \;  .
    \end{gather*}
    
    Hence, if $k = \lceil n \mathop / \phi(n) \rceil$ and $s = n - 2k$,
    where $\phi(n) \leq n \mathop / (\log_2n + \log_2 \psi(n))$ and $\psi(n)$ are growing functions,
    the following equations hold:
    \begin{gather*}
        L(\frS) \sim 2^{\lceil n \mathop / \phi(n) \rceil} + \frac{2^{n+1}}{n(1-o(1))2^{\lceil n \mathop / \phi(n) \rceil}}
            + \frac{n2^n}{n(1-o(1))} \sim 2^n   \;  , \\
        q(\frS) \sim 2^{\lceil n \mathop / \phi(n) \rceil} + \frac{2^n}{n(1-o(1))2^{\lceil n \mathop / \phi(n) \rceil}}
            + \frac{n 2^n}{2^{\lceil n \mathop / \phi(n) \rceil}} \sim \frac{n 2^n}{2^{\lceil n \mathop / \phi(n) \rceil}} \; .
    \end{gather*}

    Since the synthesis algorithm \textbf{A1} can produce a reversible circuit $\frS$ for any
    Boolean transformation $f \in F(n,q)$, we can state that
    $L(n,q_0) \leq L(\frS) \sim 2^n$, if $q_0 \sim n2^{n - \lceil n \mathop / \phi(n) \rceil}$,
    where $\phi(n) \leq n \mathop / (\log_2 n + \log_2 \psi(n))$ and $\psi(n)$ are arbitrarily slowly growing functions.
\end{proof}

\begin{theorem}
    $$
        L(n,q_0) \asymp 2^n, \text{ \,if \, } q_0 \sim n 2^{n-\lceil n \mathop / \phi(n)\rceil} \;  ,
    $$
    where $\phi(n) \leq n \mathop / (\log_2 n + \log_2 \psi(n))$
    and $\psi(n)$ are arbitrarily slowly growing functions.
\end{theorem}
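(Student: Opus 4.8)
The plan is to obtain $L(n,q_0)\asymp 2^n$ (where $\asymp$ is read as sameness of order, i.e.\ $L(n,q_0)=\Theta(2^n)$) by pairing the upper bound already established in~\ref{theorem_complexity_upper_with_memory} with a matching lower bound of the same order. Since the inequality $L(n,q_0)\lesssim 2^n$ is precisely the content of the previous theorem, essentially all of the work lies in producing a lower bound $L(n,q_0)\gtrsim 2^n$. For this I would do nothing more than specialize the general lower bound~\eqref{formula_lower_complexity}, valid for all $n>n_0$, to the particular amount of memory $q=q_0\sim n2^{\,n-\lceil n/\phi(n)\rceil}$.

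The key step is to evaluate the denominator $\log_2(n+q_0)$ appearing in~\eqref{formula_lower_complexity}. Because $q_0$ grows exponentially in $n$ while the additive term $n$ grows only linearly, one has $n+q_0\sim q_0$, and therefore
$$\log_2(n+q_0)\sim \log_2 q_0 = \log_2 n + n - \lceil n/\phi(n)\rceil \; .$$
Now $\phi(n)$ is a growing function, so $\lceil n/\phi(n)\rceil = o(n)$; together with $\log_2 n = o(n)$ this collapses the right-hand side to $\log_2(n+q_0)\sim n$. Substituting back into~\eqref{formula_lower_complexity} gives
$$L(n,q_0)\geq \frac{2^n(n-2)}{3\log_2(n+q_0)} - \frac{n}{3} \sim \frac{2^n}{3} \; ,$$
so that $L(n,q_0)\gtrsim 2^n/3$.

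Combining the two estimates yields $2^n/3 \lesssim L(n,q_0)\lesssim 2^n$, which is exactly the assertion that $L(n,q_0)$ and $2^n$ are of the same order, i.e.\ $L(n,q_0)\asymp 2^n$. The computation itself is routine; the only point requiring care is the asymptotic bookkeeping around the ceiling function and the slowly growing auxiliary functions $\phi$ and $\psi$. Concretely, I would verify under the stated constraint $\phi(n)\leq n/(\log_2 n+\log_2\psi(n))$ that both correction terms $\log_2 n$ and $\lceil n/\phi(n)\rceil$ in $\log_2(n+q_0)$ are genuinely $o(n)$, which follows once one notes that $\phi(n)\to\infty$ forces $n/\phi(n)=o(n)$. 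This is the main (and essentially the sole) obstacle, since everything else reduces to plugging $q_0$ into an already-proved inequality.
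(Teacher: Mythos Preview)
Your proposal is correct and follows exactly the same approach as the paper, which simply states that the result follows from the lower bound~\eqref{formula_lower_complexity} together with \ref{theorem_complexity_upper_with_memory}. You have in fact supplied more detail than the paper does, correctly verifying that $\log_2(n+q_0)\sim n$ so that the general lower bound specializes to $L(n,q_0)\gtrsim 2^n/3$.
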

\begin{proof}
    Follows from the lower bound~\eqref{formula_lower_complexity} for the function $L(n,q)$
    and from the~\ref{theorem_complexity_upper_with_memory}.
\end{proof}
    
Now we can upper bound the quantum weight of a reversible circuit with additional inputs.
\begin{theorem}\label{theorem_quantum_weight_upper_with_memory}
    $$
        W(n,q_0) \lesssim \WC \cdot 2^n + \WT \cdot n2^{n - \lceil n \mathop / \phi(n)\rceil},
            \text{ \,if \, } q_0 \sim n 2^{n-\lceil n \mathop / \phi(n)\rceil}  \;  ,
    $$
    where $\phi(n) \leq n \mathop / (\log_2 n + \log_2 \psi(n))$
    and $\psi(n)$ are arbitrarily slowly growing functions.
\end{theorem}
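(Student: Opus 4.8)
The plan is to reuse verbatim the circuit $\frS$ produced by the synthesis algorithm \textbf{A1} in the proof of \ref{theorem_complexity_upper_with_memory}, with the same parameter choices $k = \lceil n \mathop / \phi(n)\rceil$ and $s = n - 2k$, and merely to re-examine its cost through the weight formula~\eqref{formula_quantum_weight_and_gate_complexity_dependency}, $W(\frS) = \WC \cdot \LC(\frS) + \WT \cdot \LT(\frS)$. Instead of collapsing everything into a single gate count, I would track the contributions of NOT/CNOT gates and of C\textsuperscript{2}NOT gates separately across the five sub-circuits $\frS_1$--$\frS_5$. The crucial input, already recorded in the description of \textbf{A1}, is that $\frS_2$ and $\frS_3$ consist only of CNOT gates, that $\frS_5$ consists only of C\textsuperscript{2}NOT gates, and that $\frS_1$ and $\frS_4$ are almost entirely C\textsuperscript{2}NOT, their NOT/CNOT content being negligible by \ref{lemma_complexity_of_all_conjunctions_of_n_variables}.

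First I would bound $\LC(\frS)$. The only non-negligible CNOT contributions come from $\frS_2$ and $\frS_3$, giving $\LC(\frS) \sim p2^{s+1} + pn2^{n-k}$. Substituting $p \sim 2^k \mathop / s$ and $s = n - 2k \sim n$ (which holds because $k = \lceil n \mathop / \phi(n)\rceil = o(n)$ for a growing $\phi$), the first term becomes $\sim 2^{n-k+1} \mathop / n$ and the second $\sim 2^n$; since $k \geq 1$ the first term is $o(2^n)$, so $\LC(\frS) \sim 2^n$. Next I would bound $\LT(\frS)$, whose C\textsuperscript{2}NOT gates come from $\frS_1$, $\frS_4$ and $\frS_5$, yielding $\LT(\frS) \sim 2^k + 2^{n-k} + n2^{n-k} \sim 2^k + n2^{n-k}$.

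The only point requiring a little care is the verification that $2^k = o(n 2^{n-k})$: since $2^k \mathop / (n2^{n-k}) = 2^{2k} \mathop / (n2^n)$ and $2k = 2\lceil n \mathop / \phi(n)\rceil = o(n)$, this ratio tends to $0$, whence $\LT(\frS) \sim n2^{n-k} = n2^{n-\lceil n \mathop / \phi(n)\rceil}$. This is exactly where the hypothesis $\phi(n) \leq n \mathop / (\log_2 n + \log_2 \psi(n))$ enters, since it guarantees $k = o(n)$.

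Finally I would assemble the pieces: by~\eqref{formula_quantum_weight_and_gate_complexity_dependency}, $W(\frS) \sim \WC \cdot 2^n + \WT \cdot n2^{n-\lceil n \mathop / \phi(n)\rceil}$. Because \textbf{A1} realizes an arbitrary $f \in F(n,q_0)$ using $q_0 \sim n2^{n-\lceil n \mathop / \phi(n)\rceil}$ additional inputs, as established in \ref{theorem_complexity_upper_with_memory}, taking the maximum over $f$ yields the claimed upper bound on $W(n,q_0)$. I do not expect any genuine obstacle here beyond the correct sorting of gates by type; the argument is essentially a refined bookkeeping of the already-constructed circuit, and the substantive asymptotic check is the dominance of $n2^{n-k}$ over $2^k$ in $\LT(\frS)$ noted above.
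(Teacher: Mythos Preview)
Your proposal is correct and follows essentially the same approach as the paper: you separately tally $\LC$ and $\LT$ across the five sub-circuits of algorithm \textbf{A1}, simplify with $k = \lceil n/\phi(n)\rceil$, $s = n-2k$, and then invoke~\eqref{formula_quantum_weight_and_gate_complexity_dependency}. One minor misattribution: the fact that $k = o(n)$ (hence $2^k = o(n2^{n-k})$) comes from $\phi(n)$ being a \emph{growing} function, not from the upper bound $\phi(n) \leq n/(\log_2 n + \log_2 \psi(n))$; that upper bound is what guarantees $2^k/s \geq \psi(n)$ so that $p \sim 2^k/s$.
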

\begin{proof}
    To prove the bound of the theorem, we should count the number of NOT, CNOT and
    C\textsuperscript{2}NOT gates in a reversible circuit produced by the synthesis algorithm \textbf{A1}.

    From the description of the algorithm we can see that
    \begin{align*}
        \LC_1 & = O(k) \; ,          & \LT_1 &\sim 2^k \; ,\\
        \LC_2 & \sim p2^{s+1} \; ,   & \LT_2 &= 0 \; ,\\
        \LC_3 & \leq pn 2^{n-k} \; , & \LT_3 &= 0 \; ,\\
        \LC_4 & = O(n-k) \; ,        & \LT_4 &\sim 2^{n-k} \; ,\\
        \LC_5 & = 0 \; ,             & \LT_5 &\leq n 2^{n-k} \; .
    \end{align*}
    Providing $k = \lceil n \mathop / \phi(n) \rceil$ and $s = n - 2k$,
    where $\phi(n) \leq n \mathop / (\log_2n + \log_2 \psi(n))$ and $\psi(n)$ are growing functions,
    we obtain the following upper bounds:
    \begin{align*}
        \LC(n,q_0) &\lesssim O(k) + p2^{s+1} + pn 2^{n-k} + O(n-k)  \;  , \\
        \LC(n,q_0) &\lesssim \frac{2^{k+s+1}}{s} + \frac{n2^n}{s}  \;  , \\
        \LC(n,q_0) &\lesssim \frac{2^{n+1}}{(n-o(n))2^{\lceil n \mathop / \phi(n) \rceil}} + \frac{n2^n}{n - o(n)}
            \sim 2^n  \;  ,\\
        \LT(n,q_0) &\lesssim 2^k + 2^{n-k} + n 2^{n-k} \sim 2^k + n2^{n-k}  \;  , \\
        \LT(n,q_0) &\lesssim 2^{\lceil n \mathop / \phi(n) \rceil} + \frac{n2^n}{2^{\lceil n \mathop / \phi(n) \rceil}}
            \sim \frac{n2^n}{2^{\lceil n \mathop / \phi(n) \rceil}}  \;  .
    \end{align*}
    
    From these upper bounds and the equation~\eqref{formula_quantum_weight_and_gate_complexity_dependency}
    the upper bound for the function $W(n,q_0)$ from the theorem follows.
\end{proof}

Note that in the case, when $\WT = O(\WC) = const$, the following equation holds:
$$
    W(n, q_0) \asymp \LC(n, q_0) \sim L(n, q_0)  \; ,
$$
where $q_0 \sim n 2^{n-\lceil n \mathop / \phi(n) \rceil}$; $\phi(n) \leq n \mathop / (\log_2 n + \log_2 \psi(n))$
and $\psi(n)$ are arbitrarily slowly growing functions.
In other words, the number of C\textsuperscript{2}NOT gates in a reversible circuit
produced by the synthesis algorithm \textbf{A1} is negligible compared to the overall gate complexity.
And from the equations~\eqref{formula_lower_complexity}, \eqref{formula_lower_quantum_weight},
\eqref{formula_upper_complexity_no_mem} and \eqref{formula_upper_quantum_weight_no_mem}
it follows that in a reversible circuit without additional inputs the number of C\textsuperscript{2}NOT gates
is equivalent by the order of magnitude to the overall gate complexity.

\section{Reducing the depth with the help of additional inputs}\label{section_minimal_depth}

We described the synthesis algorithm \textbf{A1}, whose main goal was the gate complexity reduction with the help of
additional inputs.
However, we can use a similar technique to reduce the depth of a reversible circuit.
Let's denote such an algorithm as \textbf{A2}.
The essence of the synthesis algorithm \textbf{A2} is the copying of the value from an output to the additional inputs
with the logarithmic depth (see~\ref{pic_parallel_copy}).
After this we can perform a desired operation with the depth equal to one.
All we need to do is to copy the value a sufficient number of times.

\begin{figure}[ht]
    \centering
    \includegraphics{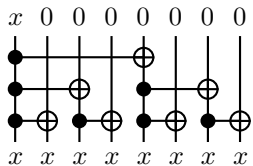}
    \caption
    {
        Copying the value $x$ to the additional inputs with the logarithmic depth.
    }\label{pic_parallel_copy}
\end{figure}

First, we prove the following lemma about the depth of a reversible circuit implementing all conjunctions
of $n$ variables of the form $x_1^{a_1} \wedge \cdots \wedge x_n^{a_n}$, $a_i \in \BB$.
\begin{lemma}\label{lemma_depth_of_all_conjunctions_of_n_variables}
All conjunctions of $n$ variables of the form $x_1^{a_1} \wedge \cdots \wedge x_n^{a_n}$, $a_i \in \BB$,
can be implemented in a reversible circuit $\frS_n$, which consists of gates from $\Omega_{n+q}^2$,
with the depth $D(\frS_n) \sim n$, the gate complexity $L(\frS_n) \sim 3 \cdot 2^n$
and with $q(\frS_n) \sim 3 \cdot 2^n$ additional inputs.
\end{lemma}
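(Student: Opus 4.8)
The plan is to reuse the recursive ``split the variables into two halves'' structure from the proof of \ref{lemma_complexity_of_all_conjunctions_of_n_variables}, but to repair the single step that makes that construction deep. In the earlier circuit each conjunction $u_i$ of the first $\lceil n/2\rceil$ variables is combined, via one C\textsuperscript{2}NOT gate, with each of the $\sim 2^{n/2}$ conjunctions $v_j$ of the last $\lfloor n/2\rfloor$ variables. Since a wire carrying $u_i$ can serve as a control input of at most one gate in a given depth-one layer, performing these $\sim 2^{n/2}$ combinations directly would force $\sim 2^{n/2}$ of them to be serialized, which is far too deep. The idea, illustrated in~\ref{pic_parallel_copy}, is therefore to first duplicate every $u_i$ and every $v_j$ into $\sim 2^{n/2}$ copies by a balanced doubling (fan-out) tree of logarithmic depth, and only afterwards to apply all $2^n$ C\textsuperscript{2}NOT gates at once, each acting on its own pair of copies and its own fresh output line, so that this whole block has depth one.

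Concretely, I would build $\frS_n$ in four stages. First, obtain all literals $x_i$ and $\bar x_i$, $1\le i\le n$, by a depth-one CNOT layer copying each $x_i$ to a fresh line followed by a depth-one NOT layer negating the copies; this costs $O(n)$ gates, $O(n)$ additional inputs and depth $O(1)$. Second, run $\frS_{\lceil n/2\rceil}$ and $\frS_{\lfloor n/2\rfloor}$ in parallel on the two disjoint blocks of literals, producing all half-conjunctions; acting on disjoint wires, their combined depth is $D(\frS_{\lceil n/2\rceil})$. Third, for each half-conjunction create the required $\sim 2^{n/2}$ copies: at the $t$-th doubling step the $2^{t-1}$ existing copies of a value each control one fresh line, which is a single depth-one layer, and $\sim n/2$ such layers reach $2^{n/2}$ copies; because the doubling trees of distinct values occupy disjoint additional inputs, all trees run concurrently, so this stage has depth $\sim n/2$ and uses about $2\cdot 2^n$ CNOT gates and $2\cdot 2^n$ additional inputs in total. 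Fourth, apply the $2^n$ gates $TOF(\text{copy of }u_i,\ \text{copy of }v_j,\ t_{i,j})$ with distinct fresh targets $t_{i,j}$; since every copy is used as a control exactly once and all targets differ, these gates have pairwise disjoint index sets and form one depth-one layer, costing $2^n$ gates and $2^n$ additional inputs.

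Summing over the sub-circuits yields the recurrences $L(\frS_n)\sim 3\cdot 2^n + 2L(\frS_{n/2})$ and $q(\frS_n)\sim 3\cdot 2^n + 2q(\frS_{n/2})$, whose dominant term is $3\cdot 2^n$ because $2L(\frS_{n/2})\sim 6\cdot 2^{n/2}=o(2^n)$; hence $L(\frS_n)\sim q(\frS_n)\sim 3\cdot 2^n$. For the depth the recurrence is $D(\frS_n)\sim D(\frS_{n/2})+n/2+O(1)$, which unrolls to $D(\frS_n)\sim \sum_{k\ge 1} n/2^{k} + O(\log n)\sim n$, as required, with base case $D(\frS_1)=O(1)$. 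I expect the delicate point to be the two ``depth-one'' claims, namely that each doubling layer and the entire final C\textsuperscript{2}NOT layer really consist of gates with disjoint index sets; this is precisely where the factor $3\cdot 2^n$ is forced, since one must provision enough copies of every half-conjunction that no wire is reused as a control within a single layer. The other place demanding care is checking that the geometric sum of the per-level copying depths converges to $n$ with leading constant exactly $1$ rather than a larger multiple.
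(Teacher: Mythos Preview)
Your proposal is correct and follows essentially the same approach as the paper: a recursive split into two halves run in parallel, logarithmic-depth fan-out trees to make $\sim 2^{n/2}$ copies of each half-conjunction, and a single depth-one layer of $2^n$ C\textsuperscript{2}NOT gates on disjoint wire triples, yielding the same recurrences $D(\frS_n)\sim n/2 + D(\frS_{n/2})$ and $L(\frS_n)\sim q(\frS_n)\sim 3\cdot 2^n + 2L(\frS_{n/2})$. Your write-up is in fact more explicit than the paper's about why the copying and combining layers are each genuinely depth-one.
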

\begin{proof}
    First step is obtaining inversions of all input variables: $\bar x_i$, $1 \leq i \leq n$.
    This can be done with the depth $D_1 = 2$,
    using $L_1 = 2n$ NOT and CNOT gates and $q_1 = n$ additional inputs.

    We construct our reversible circuit $\frS_n$ in the same way
    as in~\ref{lemma_complexity_of_all_conjunctions_of_n_variables}, using sub-circuits $\frS_{\lceil n/2 \rceil}$
    and $\frS_{\lfloor n/2 \rfloor}$ (see~\ref{pic_conjunctions_depth}).
    Any output of these sub-circuits will be used no more than in $2 \cdot 2^{n/2}$ conjunctions, so
    all conjunctions can be implemented with the depth $D_2 \leq 2 + n/2$,
    using $2^{n+1}$ CNOT gates, $2^n$ C\textsuperscript{2}NOT gates and $q_2 = 3 \cdot 2^n$ additional inputs.

    Hence, the following equations hold:
    \begin{gather*}
        D(\frS_n) \sim \frac{n}{2} + D(\frS_{n/2}) \sim n \;  , \\
        L(\frS_n) \sim 3 \cdot 2^n + 2L(\frS_{n/2}) \sim 3 \cdot 2^n \;  , \\
        q(\frS_n) \sim 3 \cdot 2^n + 2q(\frS_{n/2}) \sim 3 \cdot 2^n \;  .
    \end{gather*}
\end{proof}

\begin{figure}[ht]
    \centering
    \includegraphics{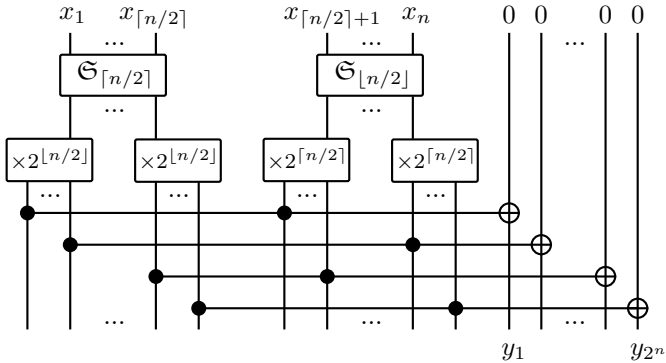}
    \caption
    {
        The structure of a reversible circuit $\frS_n$ implementing all conjunctions
        of $n$ variables with the minimal depth.
    }\label{pic_conjunctions_depth}
\end{figure}

Now we can prove the next theorem of the paper.
\begin{theorem}\label{theorem_depth_upper_with_memory_3n}
    $$
        D(n,q_1) \lesssim 3n, \text{ \,if\quad} q_1 \sim 2^n \; .
    $$
    A reversible circuit $\frS$ with the depth $D(\frS) \sim 3n$ has the gate complexity $L(\frS) \sim 2^{n+1}$
    and the quantum weight $W(\frS) \sim \WC \cdot$ $\cdot 2^{n+1} + \WT \cdot n2^{n - \lceil n \mathop / \phi(n)\rceil}$,
    where $\phi(n) \leq n \mathop / (\log_2 n + \log_2 \psi(n))$ and $\psi(n)$ are arbitrarily slowly growing functions.
\end{theorem}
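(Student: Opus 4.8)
The plan is to present a synthesis algorithm \textbf{A2} that keeps the five-block decomposition $\frS_1,\dots,\frS_5$ and the parameters $k=\lceil n/\phi(n)\rceil$, $s=n-2k$ of the proof of \ref{theorem_complexity_upper_with_memory}, but modifies \textbf{A1} in two ways aimed at the depth. First, every block of conjunctions is built by the depth-optimal construction of \ref{lemma_depth_of_all_conjunctions_of_n_variables} rather than by \ref{lemma_complexity_of_all_conjunctions_of_n_variables}. Second, before a value is used as a control of several gates it is replicated onto fresh additional inputs by the logarithmic-depth copying of \ref{pic_parallel_copy}; once it has been copied the required number of times, all gates reading it form a single layer of depth one, so each sum in \eqref{formula_analog_sdnf}, \eqref{formula_analog_sdnf_improved} and \eqref{formula_function_decomposition_by_last_variables} is computed by a balanced binary XOR-tree. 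Because the block structure and $k,s$ are unchanged, the gate-count bookkeeping of \ref{theorem_quantum_weight_upper_with_memory} applies almost verbatim and only the depth needs a fresh analysis.

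I would then bound the depth of each block. By \ref{lemma_depth_of_all_conjunctions_of_n_variables}, $D_1\sim k$ for the conjunctions of the first $k$ variables and $D_4\sim n-k$ for those of the last $n-k$ variables. In $\frS_2$ a conjunction of a group $G_t$ occurs in up to $2^{s-1}$ of that group's functions, so it is copied $\sim 2^{s}$ times at depth $\sim s$, after which the $\le 2^s$ group functions are produced by trees of depth $O(\log s)$, giving $D_2\sim s$. In $\frS_3$ each group function is used $O(n2^{k})$ times and each of the $n2^{n-k}$ coordinate functions is an XOR of at most $p$ group functions, so the copying has depth $O(\log n+k)$ and the trees depth $O(\log p)=O(k)$, whence $D_3=o(n)$; this block is nonetheless the one that carries the bulk of the gates and cells, since its copying produces $\sim 2^n$ replicas. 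Finally, in $\frS_5$ each of the $2^{n-k}$ conjunctions of the last variables is copied $n$ times (depth $O(\log n)$), the $n2^{n-k}$ products $\mathrm{conj}_i\wedge f_{i,j}$ are formed in one layer, and the $2^{n-k}$ products of a fixed output are summed by a tree, so $D_5\sim n-k$.

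Adding the depths of the five blocks gives
\begin{gather*}
    D(\frS)\le D_1+D_2+D_3+D_4+D_5\sim k+s+o(n)+(n-k)+(n-k)\sim 3n \; ,
\end{gather*}
since $s=n-2k$ and $k=o(n)$; this establishes $D(n,q_1)\lesssim 3n$. The additional inputs are dominated by the $\sim 2^n$ copies created in $\frS_3$, so $q_1\sim 2^n$. For the remaining claims I would recount the gates as in \ref{theorem_quantum_weight_upper_with_memory}: the C\textsuperscript{2}NOT gates occur only in $\frS_1,\frS_4,\frS_5$, giving $\LT(\frS)\sim 2^k+n2^{n-k}\sim n2^{n-\lceil n/\phi(n)\rceil}$, whereas in $\frS_3$ the XOR-trees contribute $\sim 2^n$ CNOT gates and the copying another $\sim 2^n$, so $\LC(\frS)\sim 2^{n+1}$. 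With \eqref{formula_quantum_weight_and_gate_complexity_dependency} this yields $L(\frS)\sim 2^{n+1}$ and the stated value of $W(\frS)$.

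The step I expect to be the main obstacle is the simultaneous control of depth and size in $\frS_3$: one must verify that replicating every group function $\sim n2^{k}$ times keeps each layer genuinely of depth one and the copying depth at $O(\log n+k)=o(n)$, while the $\sim 2^n$ extra cells and the extra $\sim 2^n$ CNOT gates it introduces neither spill past $q_1\sim 2^n$ nor inflate $\LC$ beyond $\sim 2^{n+1}$. The depth lemma \ref{lemma_depth_of_all_conjunctions_of_n_variables} already supplies the $\sim n-k$ contributions of $\frS_4$ and, through the same copy-then-sum pattern, of $\frS_5$, and the copy-bound $\sim s$ for $\frS_2$ is the only other place where a full linear term arises, so the crux is checking that exactly three blocks contribute $\sim n$ while $\frS_3$ contributes size and memory but only sublinear depth.
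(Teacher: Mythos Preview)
Your depth analysis of $\frS_3$ contains a genuine error that breaks the $3n$ bound as you have written it. You assert that ``each group function is used $O(n2^{k})$ times''; this is the \emph{average} fan-out (there are $pn2^{n-k}\sim n2^n/s$ usages spread over $\sim 2^{k+s}/s$ group functions), but the copying depth is governed by the \emph{maximum} fan-out. Nothing prevents every one of the $n2^{n-k}$ coordinate functions $f_{i,j}$ from selecting the same $g\in G_t$ for some group $t$, so the worst-case fan-out is $n2^{n-k}$ and the copying depth is $\lceil\log_2(n2^{n-k})\rceil\sim n-k$. Together with the XOR-tree depth $\lceil\log_2 p\rceil$ this gives $D_3\sim n$, not $o(n)$. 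Plugging this into your sequential sum yields
\[
D_1+D_2+D_3+D_4+D_5\;\sim\;k+(n-2k)+n+(n-k)+(n-k)\;=\;4n-3k\;\sim\;4n,
\]
which misses the target.

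The repair is exactly the observation the paper makes and that your write-up omits: $\frS_1,\frS_2,\frS_3$ touch only $x_1,\dots,x_k$ (and fresh ancillas), while $\frS_4$ touches only $x_{k+1},\dots,x_n$, so they can be executed in parallel. With the corrected $D_3\sim n$ one gets
\[
\max(D_1+D_2+D_3,\;D_4)+D_5\;\sim\;\max(2n-k,\;n-k)+(n-k)\;=\;3n-2k\;\sim\;3n.
\]
The paper organises the same idea with six sub-circuits and the parameter relation $k+s=n$ (rather than your $s=n-2k$), obtaining $D(\frS)\sim 2n+s$; its choice $s\sim n$ then gives $3n$. Your gate-count, $\LT$ and memory estimates are fine and agree with the paper's, so once you correct $D_3$ and invoke the parallelism, your argument goes through.
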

\begin{proof}
    We will describe the synthesis algorithm \textbf{A2}, which is similar to the synthesis algorithm \textbf{A1}
    and whose main goal is the reduction of the depth with the help of additional inputs.

    Let's consider a transformation $f\colon \BB^n \to \BB^n$.
    It can be represented by the formulae~\eqref{formula_function_decomposition_by_last_variables}--%
    \eqref{formula_analog_sdnf_improved}, see pages~\pageref{formula_function_decomposition_by_last_variables}--%
    \pageref{formula_analog_sdnf_improved}.

    Note that all Boolean functions of a group $G_i$ can be implemented, using a similar technique
    as in the~\ref{lemma_depth_of_all_conjunctions_of_n_variables}. This requires $L \sim 3 \cdot 2^s$ CNOT gates
    ($2^{s+1}$ gates are used to copy values to the additional inputs) and $q \sim 2^{s+1}$ additional inputs
    (see~\ref{pic_all_g_i_depth}).    
    
    \begin{figure}[ht]
        \centering
        \includegraphics{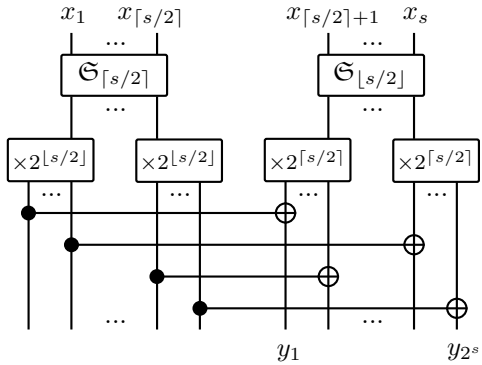}
        \caption
        {
            The structure of a reversible circuit implementing all Boolean functions $g \in G_i$
            with the minimal depth.
        }\label{pic_all_g_i_depth}
    \end{figure}
    
    The synthesis algorithm \textbf{A2} constructs a reversible circuit $\frS$ implementing the transformation $f$%
    ~\eqref{formula_function_decomposition_by_last_variables}
    from the following sub-circuits (see~\ref{pic_six_subschemes}):

    \begin{figure}[ht]
        \centering
        \includegraphics{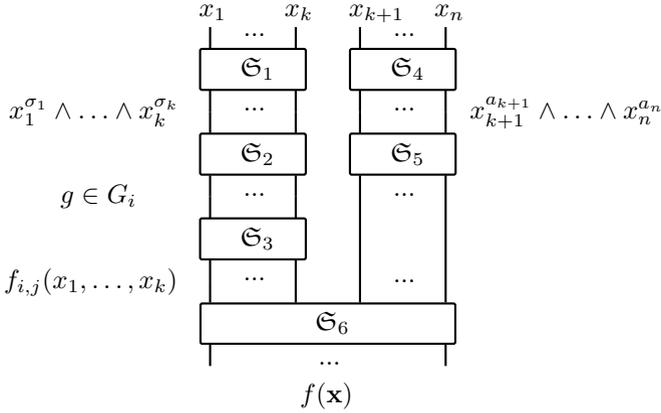}
        \caption
        {
            The structure of a reversible circuit $\frS$ produced by the synthesis algorithm \textbf{A2}.
        }\label{pic_six_subschemes}
    \end{figure}

    \begin{enumerate}
        \item
            Sub-circuit $\frS_1$ implementing all conjunctions of the first $k$ variables $x_i$
            by the~\ref{lemma_depth_of_all_conjunctions_of_n_variables}
            with the depth $D_1 \sim k$, the gate complexity $L_1 \sim 3 \cdot 2^k$ and with $q_1 \sim 3 \cdot 2^k$
            additional inputs. The sub-circuit contains $2^k$ C\textsuperscript{2}NOT gates.

        \item
            Sub-circuit $\frS_2$ implementing all Boolean functions $g \in G_i$ for all $i \in \ZZ_p$
            by the formula~\eqref{formula_analog_sdnf} with the depth $D_2 \sim s$,
            the gate complexity $L_2 \sim 3p2^s$ and with
            $q_2 \sim p2^{s+1}$ additional inputs (see the note above about the implementation
            of all Boolean functions of a group $G_i$).
            The sub-circuit consists only of CNOT gates.

        \item
            Sub-circuit $\frS_3$ implementing all $n2^{n-k}$ coordinate functions $f_{i,j}(\vv x)$,
            $i \in \ZZ_{2^{n-k}}$, $j \in \ZZ_n$, by the formula~\eqref{formula_analog_sdnf_improved}.
            The feature of this sub-circuit is that a Boolean function $g \in G_t$ can be used more than once.
            The maximum usage count for a function $g$ is $n2^{n-k}$.            
            So, first of all we need to copy the values from outputs of the sub-circuit $\frS_2$
            for all such Boolean functions.
            This can be done with the depth equal to $\lceil n-k +\log_2 n\rceil$, using no more than $pn2^{n-k}$ gates
            and $pn2^{n-k}$ additional inputs (see~\ref{pic_parallel_copy}).
            After this, we implement XOR function of obtained outputs with the depth equal to $\lceil\log_2 p\rceil$,
            the gate complexity equal to $(p-1)n2^{n-k}$
            and without additional inputs (see~\ref{pic_logarithmic_xor}).
            Therefore, the sub-circuit $\frS_3$ has the depth $D_3 \sim n-k + \log_2 p$,
            the gate complexity $L_3 \sim (2p-1)n 2^{n-k}$ and $q_3 \sim (p-1)n 2^{n-k}$ additional inputs.
            It consists only of CNOT gates.

        \item
            Sub-circuit $\frS_4$ implementing all conjunctions of the last $(n-k)$ variables $x_i$
            by the~\ref{lemma_depth_of_all_conjunctions_of_n_variables}
            with the depth $D_4 \sim (n-k)$, the gate complexity $L_4 \sim 3 \cdot 2^{n-k}$ and with $q_4 \sim 3 \cdot 2^{n-k}$
            additional inputs. The sub-circuit contains $2^{n-k}$ C\textsuperscript{2}NOT gates.

        \item
            Sub-circuit $\frS_5$, which is needed to copy $(n-1)$ times every output of the sub-circuit $\frS_4$.
            This can be done with the depth $D_5 \sim \log_2 n$, the gate complexity $L_5 = (n-1)2^{n-k}$
            and with $q_5 = (n-1)2^{n-k}$ additional inputs.
            The sub-circuit consists only of CNOT gates.

        \item        
            Sub-circuit $\frS_6$ implementing the transformation $f$
            by the formula~\eqref{formula_function_decomposition_by_last_variables}.
            The structure of the sub-circuit is as follows: all $n2^{n-k}$ coordinate functions $f_{i,j}(\vv x)$
            are grouped by $2^{n-k}$ functions ($n$ groups in total, which correspond to $n$ outputs of the transformation $f$).
            Functions in a group are again grouped by two.
            For every pair of functions we implement a conjunction of the corresponding outputs of sub-circuits
            $\frS_3$ and $\frS_5$, using 2 C\textsuperscript{2}NOT gates and one additional input for
            storing an intermediate result (see~\ref{pic_s_5_logarithmic_depth}).
            Thus, this part of the sub-circuit has the depth equal to 2, requires $n2^{n-k}$ C\textsuperscript{2}NOT gates and
            $n2^{n-k-1}$ additional inputs.
            After this, in every of $n$ groups of obtained outputs we implement XOR function with the logarithmic depth
            (see~\ref{pic_logarithmic_xor} and~\ref{pic_s_5_logarithmic_depth}).
            This part of the sub-circuit has the depth equal to $n-k-1$, requires $n(2^{n-k-1} - 1)$ CNOT gates and
            doesn't require additional inputs, because we can use the existing outputs.
            
            Summing up, the sub-circuit $\frS_6$ has the depth $D_5 \sim n-k$, the gate complexity $L_6 \sim 3n 2^{n-k-1}$
            and $q_6 \sim n 2^{n-k-1}$ additional inputs.
    \end{enumerate}
    \begin{figure}[ht]
        \centering
        \includegraphics{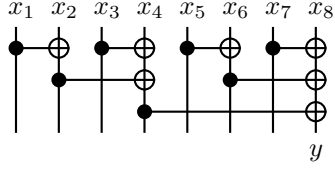}
        \caption
        {
            Implementing the function $y = x_1 \oplus \cdots \oplus x_8$ in a reversible circuit with the logarithmic depth
            (this is a part of the sub-circuit $\frS_3$ produced by the synthesis algorithm \textbf{A2}).
        }\label{pic_logarithmic_xor}
    \end{figure}
    \begin{figure}[ht]
        \centering
        \includegraphics{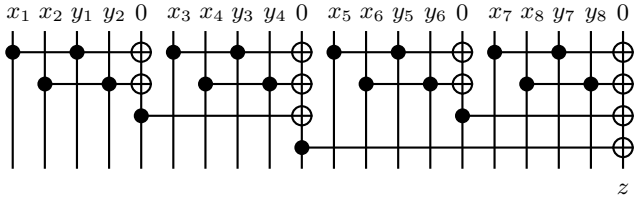}
        \caption
        {
            Implementing the function $z = \bigoplus_{i=1}^8{x_i \wedge y_i}$ in a reversible circuit with the logarithmic depth
            (this is a part of the sub-circuit $\frS_6$ produced by the synthesis algorithm \textbf{A2}).
        }\label{pic_s_5_logarithmic_depth}
    \end{figure}

    Note that the sub-circuits $\frS_1$--$\frS_3$ and $\frS_4$--$\frS_5$ can work in parallel,
    because they use disjoint subsets of the inputs $x_1, \cdots, x_n$.

    We are seeking the values of parameters $k$ and $s$ that satisfy the following conditions:
    $$
        \left\{
            \begin{array}{lr}
                k + s = n \;, & \\
                1 \leq k < n \;, & \\
                1 \leq s < n \;, & \\
                2^k \mathop / s \geq \psi(n) \;, & \text{where $\psi(n)$ is a growing function.}
            \end{array}
        \right.
    $$
    In this case $p = \lceil 2^k \mathop / s \rceil \sim 2^k \mathop / s$.

    Summing up depths, gate complexities and the number of additional inputs for all sub-circuits $\frS_1$--$\frS_6$,
    we obtain the following bounds for the circuit $\frS$ parameters.
    
    The depth:
    \begin{gather}
        D(\frS) \sim \max(k + s + n - k + \log_2 p \;;\; n-k + \log_n) + n - k  \; ,  \notag \\
        D(\frS) \sim 2n + s \; . \label{formula_depth_general_upper_bound}
    \end{gather}
    
    The gate complexity:
    \begin{multline*}
        L(\frS) \sim 3 \cdot 2^k + 3p2^s + (2p-1)n2^{n-k} + \\
            + 3 \cdot 2^{n-k} + n2^{n-k} + 3n2^{n-k-1}  \; ,
    \end{multline*}
    \begin{equation}
        L(\frS) \sim 3 \cdot \frac{2^n}{2^s} + \frac{3 \cdot 2^n}{s} + \frac{n2^{n+1}}{s} \sim \frac{n2^{n+1}}{s}  \; .
            \label{formula_complexity_for_depth_general_upper_bound}
    \end{equation}
    
    The number of additional inputs:
    \begin{gather}
        q(\frS) \sim 3 \cdot 2^k + p2^{s+1} + pn2^{n-k} + 3 \cdot 2^{n-k} + n2^{n-k} + n2^{n-k-1}  \; , \notag \\
        q(\frS) \sim 3 \cdot \frac{2^n}{2^s} + \frac{2^{n+1}}{s} + \frac{n2^n}{s} \sim \frac{n2^n}{s}  \; .
            \label{formula_memory_for_depth_general_upper_bound}
    \end{gather}
       
    From the description of the synthesis algorithm \textbf{A2} we can see that
    \begin{align*}
        \LC_1 & \sim 2^{k+1} \; ,         & \LT_1 &\sim 2^k \; ,\\
        \LC_2 & \sim 3p2^s \; ,           & \LT_2 &= 0 \; ,\\
        \LC_3 & \sim pn 2^{n-k+1} \; ,    & \LT_3 &= 0 \; ,\\
        \LC_4 & \sim 2^{n-k+1} \; ,       & \LT_4 &\sim 2^{n-k} \; ,\\
        \LC_5 & \sim n2^{n-k} \; ,        & \LT_5 &= 0 \; ,\\
        \LC_6 & \sim n2^{n-k-1} \; ,      & \LT_6 &= n 2^{n-k} \; .
    \end{align*}
    This implies that
    \begin{align}
        \LC(\frS) &\sim 2^{k+1} + \frac{n2^{n+1}}{s} \sim \frac{n2^{n+1}}{s}  \; ,
            \label{formula_lc_complexity_for_depth_general_upper_bound} \\
        \LT(\frS) &\sim 2^k + n2^{n-k}  \; .
            \label{formula_lt_complexity_for_depth_general_upper_bound}
    \end{align}
        
    Let $k = \lceil n \mathop / \phi(n) \rceil$, where $\phi(n) < n$ is a growing function.
    In this case $s = n - \lceil n \mathop / \phi(n)\rceil$ and
    \begin{multline*}
        2^k \geq s\psi(n) \Rightarrow k \geq \log_2 s + \log_2 \psi(n) \Rightarrow \\
            \Rightarrow \phi(n) \leq \frac{n}{\log_2 s + \log_2 \psi(n) - 1}  \; .
    \end{multline*}
    We can choose any arbitrarily slowly growing functions $\phi(n) \leq n \mathop / (\log_2 n + \log_2 \psi(n))$
    and $\psi(n)$.

    Hence, we obtain the following bounds for the circuit $\frS$ parameters:
    \begin{gather*}
        D(\frS) \sim 2n + n - \lceil n \mathop / \phi(n)\rceil \sim 3n  \; , \\
        L(\frS) \sim \LC(\frS) \sim \frac{n2^{n+1}}{n - \lceil n \mathop / \phi(n)\rceil} \sim 2^{n+1} \; , \\
        \LT(\frS) \sim 2^{\lceil n \mathop / \phi(n)\rceil} + n2^{n - \lceil n \mathop / \phi(n)\rceil}
            \sim n2^{n - \lceil n \mathop / \phi(n)\rceil} \; , \\
        q(\frS) \sim \frac{n2^n}{n - \lceil n \mathop / \phi(n)\rceil} \sim 2^n \; . \\
    \end{gather*}

    Since the synthesis algorithm \textbf{A2} can produce a reversible circuit $\frS$ for any
    Boolean transformation $f \in F(n,q)$, we can state that $D(n,q_1) \leq D(\frS) \sim 3n$, if $q_1 \sim 2^n$.
    
    Also we can state that $L(\frS) \sim 2^{n+1}$ and
    $W(\frS) \sim \WC \cdot 2^{n+1} + \WT \cdot n2^{n - \lceil n \mathop / \phi(n)\rceil}$,
    where $\phi(n) \leq n \mathop / (\log_2 n + \log_2 \psi(n))$ and $\psi(n)$ are arbitrarily slowly growing functions.
\end{proof}

Finally, we prove the last theorem of the paper.
\begin{theorem}\label{theorem_depth_upper_with_memory_2n}
    $$
        D(n,q_2) \lesssim 2n, \text{ \,if\quad} q_2 \sim \phi(n)2^n   \; ,
    $$
    where $\phi(n) < n$ is an arbitrarily slowly growing function.    
    A reversible circuit $\frS$ with the depth $D(\frS) \sim 2n$ has the gate complexity $L(\frS) \sim \phi(n)2^{n+1}$
    and the quantum weight $W(\frS) \sim \WC \cdot$ $\cdot \phi(n)2^{n+1} + \WT \cdot 2^{n - \lceil n \mathop / \phi(n)\rceil}$.
\end{theorem}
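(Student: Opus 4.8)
The plan is to reuse the synthesis algorithm \textbf{A2} exactly as it was built in the proof of~\ref{theorem_depth_upper_with_memory_3n}, with no new circuit construction at all. The crucial observation is that the general estimates~\eqref{formula_depth_general_upper_bound}, \eqref{formula_complexity_for_depth_general_upper_bound}, \eqref{formula_memory_for_depth_general_upper_bound}, \eqref{formula_lc_complexity_for_depth_general_upper_bound} and~\eqref{formula_lt_complexity_for_depth_general_upper_bound} were derived for an \emph{arbitrary} admissible pair $(k,s)$ subject only to $k+s=n$, $1\leq k,s<n$ and $2^k\mathop/s\geq\psi(n)$; they are in no way tied to the particular choice $k=\lceil n\mathop/\phi(n)\rceil$ made there. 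Hence it suffices to retune $k$ and $s$ so that the additive term $s$ in the depth bound $D(\frS)\sim 2n+s$ becomes negligible against $2n$, which costs more additional inputs through~\eqref{formula_memory_for_depth_general_upper_bound}.

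Accordingly, I would set $s=\lceil n\mathop/\phi(n)\rceil$ and $k=n-\lceil n\mathop/\phi(n)\rceil$, the mirror image of the previous choice (a large $k$ and a small $s$). For large $n$ the constraints $1\leq k,s<n$ hold since $1<\phi(n)<n$, and $2^k\mathop/s\geq\psi(n)$ is satisfied trivially because $2^k=2^{n-\lceil n\mathop/\phi(n)\rceil}$ is far larger than $s\psi(n)$ for any arbitrarily slowly growing $\phi(n)<n$ and $\psi(n)$. Substituting this $s$ into~\eqref{formula_depth_general_upper_bound}, \eqref{formula_complexity_for_depth_general_upper_bound} and~\eqref{formula_memory_for_depth_general_upper_bound} gives, because $s=o(n)$,
\begin{gather*}
    D(\frS)\sim 2n+\lceil n\mathop/\phi(n)\rceil\sim 2n\;,\\
    L(\frS)\sim\frac{n2^{n+1}}{\lceil n\mathop/\phi(n)\rceil}\sim\phi(n)2^{n+1}\;,\\
    q(\frS)\sim\frac{n2^n}{\lceil n\mathop/\phi(n)\rceil}\sim\phi(n)2^n\;.
\end{gather*}
Since \textbf{A2} realizes every $f\in F(n,q)$, this yields $D(n,q_2)\leq D(\frS)\sim 2n$ with $q_2\sim\phi(n)2^n$, as claimed. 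For the quantum weight I would combine~\eqref{formula_quantum_weight_and_gate_complexity_dependency} with~\eqref{formula_lc_complexity_for_depth_general_upper_bound} and~\eqref{formula_lt_complexity_for_depth_general_upper_bound}: the CNOT count $\LC(\frS)\sim 2^{k+1}+n2^{n+1}\mathop/s$ is dominated by its second term, so $\LC(\frS)\sim\phi(n)2^{n+1}\sim L(\frS)$, and the Toffoli count $\LT(\frS)\sim 2^k+n2^{n-k}$ then feeds $W(\frS)=\WC\cdot\LC(\frS)+\WT\cdot\LT(\frS)$ to produce $W(\frS)\sim\WC\cdot\phi(n)2^{n+1}+\WT\cdot 2^{n-\lceil n\mathop/\phi(n)\rceil}$.

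The step I expect to require the most care is establishing which term of $\LT(\frS)\sim 2^k+n2^{n-k}$ now dominates, because the answer is the reverse of what happened in~\ref{theorem_depth_upper_with_memory_3n}. There $k$ was small, so $n2^{n-k}$ governed the Toffoli count; here $k\sim n$, so the exponent $n-\lceil n\mathop/\phi(n)\rceil$ of $2^k$ exceeds $\log_2 n+\lceil n\mathop/\phi(n)\rceil$, the exponent of $n2^{n-k}=n2^{\lceil n\mathop/\phi(n)\rceil}$, and it is the $2^k$ term that prevails. I would verify this dominance reversal explicitly by comparing exponents and using that $\phi(n)$ grows, so that the weight bound carries the factor $2^{n-\lceil n\mathop/\phi(n)\rceil}$ rather than $n2^{\lceil n\mathop/\phi(n)\rceil}$; the same comparison confirms that $\LT(\frS)$ is negligible relative to $\LC(\frS)$, justifying $L(\frS)\sim\LC(\frS)\sim\phi(n)2^{n+1}$.
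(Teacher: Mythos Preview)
Your proposal is correct and follows essentially the same route as the paper: you reuse algorithm~\textbf{A2} and the general estimates~\eqref{formula_depth_general_upper_bound}--\eqref{formula_lt_complexity_for_depth_general_upper_bound} from the proof of~\ref{theorem_depth_upper_with_memory_3n}, and then choose the mirror parameters $s=\lceil n\mathop/\phi(n)\rceil$, $k=n-\lceil n\mathop/\phi(n)\rceil$, exactly as the paper does. Your explicit discussion of why the dominant term in $\LT(\frS)\sim 2^k+n2^{n-k}$ flips from $n2^{n-k}$ to $2^k$ is a useful clarification that the paper leaves implicit.
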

\begin{proof}
    Proof is based on the proof of the previous theorem.

    Let $s = \lceil n \mathop / \phi(n) \rceil$, where $\phi(n) < n$ is a growing function.
    In this case $k = n - \lceil n \mathop / \phi(n)\rceil$ and
    $$
        \psi(n) \leq \frac{2^k}{s} \leq \frac{\phi(n)2^{n-o(n)}}{n}  \; . 
    $$
    We can see that we always able to choose a growing function $\psi(n)$ for any growing function $\phi(n) < n$.

    From the equations \eqref{formula_depth_general_upper_bound}--\eqref{formula_lt_complexity_for_depth_general_upper_bound}
    it follows that for these values of $k$ and $s$ the following equations hold:
    \begin{gather*}
        D(\frS) \sim 2n + \lceil n \mathop / \phi(n) \rceil \sim 2n  \; , \\
        L(\frS) \sim \LC(\frS) \sim \frac{n2^{n+1}}{\lceil n \mathop / \phi(n) \rceil} \sim \phi(n)2^{n+1} \; , \\
        \LT(\frS) \sim 2^{n - \lceil n \mathop / \phi(n)\rceil} + n2^{\lceil n \mathop / \phi(n) \rceil}
            \sim 2^{n - \lceil n \mathop / \phi(n)\rceil} \; , \\
        q(\frS) \sim \frac{n2^n}{\lceil n \mathop / \phi(n) \rceil} \sim \phi(n)2^n \; . \\
    \end{gather*}

    Since the synthesis algorithm \textbf{A2} can produce a reversible circuit $\frS$ for any
    Boolean transformation $f \in F(n,q)$, we can state that $D(n,q_2) \leq D(\frS) \sim 2n$, if $q_2 \sim \phi(n)2^n$,
    where $\phi(n) < n$ is an arbitrarily slowly growing function.
    
    Also, $L(\frS) \sim \phi(n)2^{n+1}$ and $W(\frS) \sim \WC \cdot \phi(n)2^{n+1}
        + \WT \cdot$ $\cdot 2^{n - \lceil n \mathop / \phi(n)\rceil}$.
\end{proof}

\section{Conclusion}
We have discussed the problem of the general synthesis of a reversible circuit,
which consists of NOT, CNOT and C\textsuperscript{2}NOT gates and which has additional inputs,
with the lowest possible gate complexity and depth.
We have studied the Shannon gate complexity function $L(n,q)$, the depth function $D(n,q)$ and
the quantum weight function $W(n,q)$ for a reversible circuit with additional inputs,
which implements a transformation $f\colon \BB^n \to \BB^n$ from the set $F(n,q)$.

The main result of the paper is the following claim.
\begin{claim}
    The usage of additional memory in a reversible circuit consisting of NOT, CNOT and C\textsuperscript{2}NOT gates
    almost always allows to reduce its gate complexity, the depth and the quantum weight.
\end{claim}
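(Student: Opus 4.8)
The plan is to read ``almost always'' as a statement about almost all transformations $f$ as $n\to\infty$, and to prove it by contrasting two facts already at our disposal: a lower bound for circuits \emph{without} additional inputs that holds not merely in the worst case but for a $1-o(1)$ fraction of transformations, and the \emph{universal} upper bounds \emph{with} additional inputs furnished by the synthesis algorithms \textbf{A1} and \textbf{A2}. Since a transformation can be realized without additional inputs only if it is a permutation, I would first restrict attention to $f \in F(n,0) = A(\BB^n)$, the even permutations of $\BB^n$; these are exactly the transformations realizable in both models, the set $A(\BB^n)$ has cardinality $\frac{1}{2}(2^n)!$, and every such $f$ also lies in $F(n,q)$ for every $q$, so the comparison ``with memory'' versus ``without memory'' is meaningful for each of them.

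The first main step is to upgrade the lower bounds~\eqref{formula_lower_complexity}--\eqref{formula_lower_quantum_weight} from worst-case to ``almost all'' statements. These bounds rest on a cardinality argument: the number of compositions of at most $L$ gates from $\Omega_n^2$ is at most $|\Omega_n^2|^{L} = 2^{O(L\log_2 n)}$ (since $|\Omega_n^2| = \Theta(n^3)$), whereas $\log_2|A(\BB^n)| \sim n2^n$ by Stirling. Hence the fraction of permutations expressible with fewer than $2^n(n-2)/(3\log_2 n) - n/3$ gates tends to $0$, and the same counting gives, for almost all $f \in A(\BB^n)$, the estimates $L(f,0) \gtrsim n2^n/(3\log_2 n)$, $D(f,0) \gtrsim 2^n/(3\log_2 n)$ and $W(f,0) \gtrsim \min(\WC,\WT)\cdot n2^n/(3\log_2 n)$. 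I would simply invoke the ``almost all'' form of this argument from~\cite{my_complexity_and_depth_no_memory}.

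The second step is the comparison. For \emph{every} $f$, algorithm \textbf{A1} yields $L(f,q_0) \lesssim 2^n$ with $q_0 \sim n2^{n-\lceil n/\phi(n)\rceil}$ by~\ref{theorem_complexity_upper_with_memory}, and algorithm \textbf{A2} yields $D(f,q_1) \lesssim 3n$ with $q_1 \sim 2^n$ by~\ref{theorem_depth_upper_with_memory_3n}. Putting these against the ``almost all'' lower bounds: for gate complexity the ratio $L(f,0)/L(f,q_0) \gtrsim n/(3\log_2 n) \to \infty$, so additional memory strictly reduces $L$ for almost all $f$; for depth the contrast is even sharper, an exponential $2^n/(3\log_2 n)$ against a linear $3n$. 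For the quantum weight I would use~\ref{theorem_quantum_weight_upper_with_memory}, namely $W(f,q_0) \lesssim \WC\cdot 2^n + \WT\cdot n2^{n-\lceil n/\phi(n)\rceil}$; comparing with $\min(\WC,\WT)\cdot n2^n/(3\log_2 n)$ exhibits a reduction whenever the two weights are of comparable order, in particular when $\WT = O(\WC)$, where $W \asymp L$ as noted after~\ref{theorem_quantum_weight_upper_with_memory}.

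The main obstacle is precisely the passage from worst case to ``almost all.'' The upper bounds are universal and cause no trouble, but the lower bounds must be known to hold for a $1-o(1)$ fraction of permutations rather than only for the single hardest one; this is inherent in the counting argument but should be stated explicitly. A secondary delicate point is the quantum weight: the second summand carries the factor $n2^{n-\lceil n/\phi(n)\rceil}$, which is $o(2^n)$ only because $n/\phi(n) \to \infty$, so one must verify that it stays below the lower bound in the admissible weight regime and acknowledge that for pathological ratios $\WT \gg \WC$ the weight reduction need not occur, even though the gate-complexity and depth reductions persist.
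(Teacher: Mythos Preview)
Your proposal is correct and follows exactly the route the paper takes: its own proof is a one-line pointer to the upper-bound theorems for $L(n,q_0)$, $D(n,q_1)$ and $W(n,q_0)$ with memory and to the lower bounds~\eqref{formula_lower_complexity}--\eqref{formula_lower_quantum_weight} specialized to $q=0$, leaving the comparison to the reader. The only place you go beyond the paper is in making ``almost always'' precise as ``for a $1-o(1)$ fraction of $f\in A(\BB^n)$'' and observing that the counting argument behind \eqref{formula_lower_complexity}--\eqref{formula_lower_quantum_weight} already yields this stronger per-function form; the paper leaves that implicit. Your caveat that the quantum-weight reduction can fail when $\WT/\WC$ is unbounded is a refinement the paper does not make.
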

The proof of the claim follows from the
Theorems~\bare\ref{theorem_complexity_upper_with_memory}--\bare\ref{theorem_depth_upper_with_memory_2n}
and the lower bounds~\eqref{formula_lower_complexity}--\eqref{formula_lower_quantum_weight}.

Solving the problem of the reversible logic synthesis, one should find a compromise between the gate complexity,
the depth (working time) and the amount of used memory (additional inputs) of a reversible circuit.
Unfortunately, we were not able to establish good upper and lower bounds for the depth function $D(n,q)$,
which would be asymptotically equivalent to each other by the order of magnitude.
However, the obtained bounds are sufficient to prove the main claim of the paper.

Further research should establish more precise relationship of a reversible circuit's parameters
from the number of additional inputs in the circuit. We hope that the paper will be the first step in this direction.

\begin{acknowledge}
  The reported study was partially supported by RFBR, research project No. 16-01-00196~A.
\end{acknowledge}

\bibliography{cc-journals,\jobname}

\begin{thebibliography}{20}
\providecommand{\natexlab}[1]{#1}
\providecommand{\url}[1]{\texttt{#1}}
\providecommand{\urlprefix}{URL }
\providecommand{\selectlanguage}[1]{\relax}

\bibitem[{Abdessaied \emph{et~al.}(2013)Abdessaied, Wille, Soeken \&
  Drechsler}]{abdessaied_reducing_depth}
\textsc{Nabila Abdessaied}, \textsc{Robert Wille}, \textsc{Mathias Soeken} \&
  \textsc{Rolf Drechsler} (2013).
\newblock Reducing the Depth of Quantum Circuits Using Additional Circuit
  Lines.
\newblock In \emph{Reversible Computation}, \textsc{GerhardW. Dueck} \&
  \textsc{D.Michael Miller}, editors, volume 7948 of \emph{Lecture Notes in
  Computer Science}, 221--233. Springer Berlin Heidelberg.
\newblock ISBN 978-3-642-38985-6.

\bibitem[{Barenco \emph{et~al.}(1995)Barenco, Bennett, Cleve, DiVincenzo,
  Margolus, Shor, Sleator, Smolin \& Weinfurter}]{barenco}
\textsc{Adriano Barenco}, \textsc{Charles~H. Bennett}, \textsc{Richard Cleve},
  \textsc{David~P. DiVincenzo}, \textsc{Norman Margolus}, \textsc{Peter Shor},
  \textsc{Tycho Sleator}, \textsc{John~A. Smolin} \& \textsc{Harald Weinfurter}
  (1995).
\newblock Elementary gates for quantum computation.
\newblock \emph{Phys. Rev. A} \textbf{52}(5), 3457--3467.

\bibitem[{Bennett(1973)}]{bennett}
\textsc{C.~H. Bennett} (1973).
\newblock Logical Reversibility of Computation.
\newblock \emph{IBM Journal of Research and Development} \textbf{17}(6),
  525--532.
\newblock ISSN 0018-8646.

\bibitem[{Feynman(1985)}]{feynman}
\textsc{Richard~P. Feynman} (1985).
\newblock Quantum Mechanical Computers.
\newblock \emph{Optics News} \textbf{11}(2), 11--20.

\bibitem[{Karpova(1987)}]{karpova}
\textsc{N.~A. Karpova} (1987).
\newblock On Complexity of Computations with Limited Memory.
\newblock In \emph{FCT}, \textsc{Lothar Budach}, \textsc{Rais~Gatic Bakharajev}
  \& \textsc{Oleg~Borisovic Lipanov}, editors, volume 278 of \emph{Lecture
  Notes in Computer Science}, 234--235. Springer.
\newblock ISBN 3-540-18740-5.

\bibitem[{Khlopotine \emph{et~al.}(2002)Khlopotine, Perkowski \&
  Kerntopf}]{iterative_compositions}
\textsc{Andrei~B. Khlopotine}, \textsc{Marek~A. Perkowski} \& \textsc{Pawel
  Kerntopf} (2002).
\newblock Reversible Logic Synthesis by Iterative Compositions.
\newblock In \emph{IWLS}, 261--266.

\bibitem[{Khrapchenko(1995)}]{khrapchenko}
\textsc{V.~M. Khrapchenko} (1995).
\newblock New inequality relations between depth and delay.
\newblock \emph{Diskr. Mat.} \textbf{7}(4), 77--85.
\newblock In Russian.

\bibitem[{Landauer(1961)}]{landauer}
\textsc{R.~Landauer} (1961).
\newblock Irreversibility and Heat Generation in the Computing Process.
\newblock \emph{IBM Journal of Research and Development} \textbf{5}(3),
  183--191.
\newblock ISSN 0018-8646.

\bibitem[{Lupanov(1970)}]{lupanov}
\textsc{O.~B. Lupanov} (1970).
\newblock On Circuits of Functional Elements with Delays.
\newblock In \emph{Probl. Kibernet.}, volume~23, 43--81. Nauka Publishers,
  Moscow.
\newblock In Russian.

\bibitem[{Maslov \emph{et~al.}(2007)Maslov, Dueck \&
  Miller}]{maslov_rm_synthesis}
\textsc{D.~A. Maslov}, \textsc{G.~W. Dueck} \& \textsc{D.~M. Miller} (2007).
\newblock Techniques for the Synthesis of Reversible Toffoli Networks.
\newblock \emph{ACM Trans. Des. Autom. Electron. Syst.} \textbf{12}(4).
\newblock ISSN 1084--4309.

\bibitem[{Miller \& Dueck(2003)}]{miller_spectral}
\textsc{D.~M. Miller} \& \textsc{G.~W. Dueck} (2003).
\newblock Spectral Techniques for Reversible Logic Synthesis.
\newblock In \emph{6th International Symposium on Representations and
  Methodology of Future Computing Technologies}, 56--62.

\bibitem[{Miller \emph{et~al.}(2010)Miller, Wille \&
  Drechsler}]{miller_reducing_complexity}
\textsc{D.~M. Miller}, \textsc{R.~Wille} \& \textsc{R.~Drechsler} (2010).
\newblock Reducing Reversible Circuit Cost by Adding Lines.
\newblock In \emph{Multiple-Valued Logic (ISMVL), 2010 40th IEEE International
  Symposium on}, 217--222.
\newblock ISSN 0195-623X.

\bibitem[{Miller \emph{et~al.}(2003)Miller, Maslov \&
  Dueck}]{miller_transform_based}
\textsc{D.~Michael Miller}, \textsc{Dmitri~A Maslov} \& \textsc{Gerhard~W.
  Dueck} (2003).
\newblock A Transformation Based Algorithm for Reversible Logic Synthesis.
\newblock In \emph{Proceedings of the 40th Annual Design Automation
  Conference}, DAC '03, 318--323. ACM, New York, NY, USA.
\newblock ISBN 1-58113-688-9.

\bibitem[{Saeedi \emph{et~al.}(2007)Saeedi, Sedighi \& Zamani}]{saeedi_novel}
\textsc{M.~Saeedi}, \textsc{M.~Sedighi} \& \textsc{M.~S. Zamani} (2007).
\newblock A novel synthesis algorithm for reversible circuits.
\newblock In \emph{Computer-Aided Design, 2007. ICCAD 2007. IEEE/ACM
  International Conference on}, 65--68.
\newblock ISSN 1092-3152.

\bibitem[{Saeedi \emph{et~al.}(2010)Saeedi, Zamani, Sedighi \&
  Sasanian}]{saeedi_cycle_based}
\textsc{Mehdi Saeedi}, \textsc{Morteza~Saheb Zamani}, \textsc{Mehdi Sedighi} \&
  \textsc{Zahra Sasanian} (2010).
\newblock Reversible Circuit Synthesis Using a Cycle-based Approach.
\newblock \emph{J. Emerg. Technol. Comput. Syst.} \textbf{6}(4), 13:1--13:26.
\newblock ISSN 1550-4832.

\bibitem[{Shannon(1949)}]{shannon}
\textsc{Claude~E. Shannon} (1949).
\newblock The Synthesis of Two Terminal Switching Circuits.
\newblock \emph{Bell System Technical Journal} \textbf{28}(1), 59--98.
\newblock ISSN 0005-8580.

\bibitem[{Shende \emph{et~al.}(2003)Shende, Prasad, Markov \& Hayes}]{shende}
\textsc{V.~V. Shende}, \textsc{A.~K. Prasad}, \textsc{I.~L. Markov} \&
  \textsc{J.~P. Hayes} (2003).
\newblock Synthesis of reversible logic circuits.
\newblock \emph{Computer-Aided Design of Integrated Circuits and Systems, IEEE
  Transactions on} \textbf{22}(6), 710--722.
\newblock ISSN 0278-0070.

\bibitem[{Toffoli(1980)}]{toffoli}
\textsc{Tommaso Toffoli} (1980).
\newblock Reversible computing.
\newblock In \emph{Automata, Languages and Programming}, \textsc{Jaco
  de~Bakker} \& \textsc{Jan van Leeuwen}, editors, volume~85 of \emph{Lecture
  Notes in Computer Science}, 632--644. Springer Berlin Heidelberg.

\bibitem[{Zakablukov(2014)}]{my_fast_synthesis_algorithm}
\textsc{Dmitry~V. Zakablukov} (2014).
\newblock Fast Synthesis of Invertible Circuits Based on Permutation Group
  Theory.
\newblock \emph{Prikl. Diskr. Mat.} \textbf{24}(2), 101--109.
\newblock In Russian.

\bibitem[{Zakablukov(2015)}]{my_complexity_and_depth_no_memory}
\textsc{Dmitry~V. Zakablukov} (2015).
\newblock On Asymptotic Gate Complexity and Depth of Reversible Circuits
  Without Additional Memory.
\newblock \emph{ArXiv e-prints}
  \urlprefix\url{http://arxiv.org/abs/1504.06876}.

\end{thebibliography}

\end{document}